\newcommand{\ecor}{$\epsilon$-correct}
\newcommand{\ceil}[1]{ \left\lceil{#1}\right\rceil}
\newcommand{\floor}[1]{ \left\lfloor{#1}\right\rfloor}
\newcommand{\parentheses}[1]{ \left({#1}\right)}
\newcommand{\abs}[1]{ \left|{#1}\right|}
\newcommand{\logp}[1]{\log\parentheses{#1}}
\newcommand{\cdotpa}[1]{\cdot\parentheses{#1}}
\newcommand{\inc}[1]{$#1 = #1 + 1$}
\newcommand{\range}[2][0]{#1,1,\ldots,#2}
\newcommand{\frange}[1]{\set{\range{#1}}}
\newcommand{\smallMultError}{(1+o(1))}
\newcommand{\lowerbound}{\max \set{\log W ,\frac{1}{2\epsilon+W^{-1}}}}
\newcommand{\smallEpsLowerbound}{\window\logp{\frac{1}{\weps}}}
\newcommand{\smallEpsMemoryConsumption}{W\cdot\logp{\frac{1}{\weps}}}
\newcommand{\largeEpsRestriction}{For any \largeEps{},}
\newcommand{\largeEps}{$\eps^{-1} \le 2W\left(1-\frac{1}{\logw}\right)$}
\newcommand{\smallEpsRestriction}{For any \smallEps{},}
\newcommand{\smallEps}{$\eps^{-1}>2W\left(1-\frac{1}{\logw}\right)=2\window(1-o(1))$}
\newcommand{\bc}{{\sc Basic-Counting}}
\newcommand{\bs}{{\sc Basic-Summing}}
\newcommand{\query}[1][] { {\sc Query}$(#1)$}
\newcommand{\add}  [1][] { {\sc Add}$(#1)$}
\newcommand{\window}{W}
\newcommand{\logw}{\log \window}
\newcommand{\flogw}{\floor{\log \window}}
\newcommand{\weps}{\window\epsilon}
\newcommand{\logweps}{\logp{\weps}}
\newcommand{\bitarray}{b}
\newcommand{\currentBlockIndex}{i}
\newcommand{\currentBlock}{\bitarray_{\currentBlockIndex}}
\newcommand{\remainder}{y}
\newcommand{\numBlocks}{k}
\newcommand{\sumOfBits}{B}
\newcommand{\blockSize}{\frac{\window}{\numBlocks}}
\newcommand{\iblockSize}{\frac{\numBlocks}{\window}}
\newcommand{\threshold}{\blockSize}
\newcommand{\halfBlock}{\frac{\window}{2\numBlocks}}
\newcommand{\blockOffset}{m}
\newcommand{\inputVariable}{x}
\newcommand{\bsTableColumnWidth}{1.7cm}
\newcommand{\bcNarrowTableColumnWidth}{1.5cm}
\newcommand{\bsrange}{ R }
\newcommand{\bssum}{ S^W }
\newcommand{\bsFracInput}{ \inputVariable' }
\newcommand{\bserror}{ \bsrange\window\epsilon }
\newcommand{\bsReminderFractionBits}{ \upsilon}
\newcommand{\bsAnalysisTarget}{ \bssum}
\newcommand{\bsAnalysisEstimator}{ \widehat{\bsAnalysisTarget}}
\newcommand{\bsAnalysisError}{ \bsAnalysisEstimator - \bsAnalysisTarget}
\newcommand{\bsRoundingError}{ \xi}
\newtheorem{theorem}{Theorem}[]
\newtheorem{lemma}[theorem]{Lemma}
\newtheorem{corollary}[theorem]{Corollary}
\newtheorem{thm}{Theorem}[section] 
\newtheorem{definition}[theorem]{Definition}
\theoremstyle{plain} 
\newcommand{\thistheoremname}{}
\newtheorem{genericthm}[thm]{\thistheoremname}
\renewcommand{\qed}{\nobreak \ifvmode \relax \else
	\ifdim\lastskip<1.5em \hskip-\lastskip
	\hskip1.5em plus0em minus0.5em \fi \nobreak
	\vrule height0.75em width0.5em depth0.25em\fi}
\def\_{\,\,\,\,\,}
\newcommand{\eps}{\epsilon}
\newcommand{\set}[1]{\left\{#1\right\}}
\begin{document}

\title{Efficient Summing over Sliding Windows}

\author{Ran Ben Basat \ \ \ \ Gil Einziger \ \ \ \ Roy Friedman \ \ \ \ Yaron Kassner\\
    Department of Computer Science\\
    Technion, Haifa 32000,~Israel\\
    \texttt{\{sran, gilga ,roy, kassnery\}}@cs.technion.ac.il}

\date{}

\setcounter{page}{0}
\thispagestyle{empty}

\maketitle

\vspace{3cm}
\begin{abstract}
This paper considers the problem of maintaining statistic aggregates over the last $\window$ elements of a data stream.
First, the problem of counting the number of $1$'s in the last $\window$ bits of a binary stream is considered.
A lower bound of $\Omega(\frac{1}{\eps}+\log\window)$ memory bits for $\weps$-\emph{additive approximations} is derived.
This is followed by an algorithm whose memory consumption is $O(\frac{1}{\eps}+\log\window)$ bits, indicating that the algorithm is optimal and that the bound is tight.
Next, the more general problem of maintaining a \emph{sum} of the last $W$ integers, each in the range of
$\frange\bsrange$,
is addressed.
The paper shows that approximating the sum within an \emph{additive error} of $\bserror$ can also be done using $\Theta(\frac{1}{\eps}+\log\window)$ bits for $\eps=\Omega\parentheses{\frac{1}{\window}}$. For $\eps=o\parentheses{\frac{1}{\window}}$, we present a \emph{succinct} algorithm which uses $\mathcal{B}\cdot\smallMultError$ bits, where $\mathcal{B}=\Theta\parentheses{\window \logp{\frac{1}{\weps}}}$ is the derived lower bound. 
We show that all lower bounds generalize to randomized algorithms as well.
All algorithms process new elements and answer queries in $O(1)$ worst-case time.
\end{abstract}

\thispagestyle{empty}

\newpage

\section{Introduction}
\label{sec:intro}

\subsubsection*{Background}
The ability to process and maintain statistics about large streams of data is useful in many domains, such as security, networking, sensor networks, economics, business intelligence, etc.
Since the data may change considerably over time, there is often a need to keep the statistics only with respect to some window of the last $\window$ elements at any given point.
A~naive solution to this problem is to keep the $W$ most recent elements, add an element to the statistic when it arrives, and subtract it when it leaves the window.
Yet, when the window of interest is large, which is often the case when data arrive at high rate, the required memory overhead may become a performance bottleneck.

Though it may be tempting to think that RAM memory is cheap, a closer look indicates that there are still performance benefits in maintaining small data structures.
For example, hardware devices such as network switches prefer to store important data in the faster and scarcely available SRAM than in DRAM.
This is in order to keep up with the ever increasing line-speed of modern networks.
Similarly, on a CPU, caches provide much faster performance than DRAM memory.
Thus, small data structures that fit inside a single cache line and can possibly be pinned there are likely to result in much faster performance than a solution that spans multiple lines that are less likely to be constantly maintained in the cache. 

A well known method to conserve space is to approximate the statistics.
\bc{} is one of the most basic textbook examples of such approximated stream processing problems~\cite{DatarGIM02}.
In this problem, one is required to keep track of the number of $1$'s in a stream of binary bits.
A \emph{$(1+\eps)$-multiplicative approximation} algorithm for this problem using $O\left(\frac{1}{\eps}\log^2\window\eps\right)$ bits was shown in~\cite{DatarGIM02}.
This solution works with amortized $O(1)$ time, but its worst case time complexity is $O(\logw)$.

A more practical related problem is \bs{}, in which the goal is to maintain the sum of the last $\window$ elements.
When all elements are non-negative integers in the range $[\bsrange+1]=\frange\bsrange$, the work in~\cite{DatarGIM02} naturally extends to provide a $(1+\eps)$-multiplicative approximation of this problem
using
$O\parentheses{\frac{1}{\eps}\cdot \parentheses{\log^2\window +\log\bsrange\cdotpa{\logw+\log\log\bsrange}}}$
bits.
The amortized time complexity becomes $O(\frac{\log\bsrange}{\logw})$ and the worst case is $O(\logw+\log\bsrange)$.

\subsubsection*{Our Contributions}
In this paper, we explore the benefits of changing the approximation guarantee from \emph{multiplicative} to \emph{additive}.
With a multiplicative approximation, the result returned can be different from the correct one by at most a multiplicative factor, e.g., $5\%$.
On the other hand, in an additive approximation, the absolute error is bounded, e.g., a deviation of up to $\pm 5$.
When the expected number of ones in a stream is small, multiplicative approximation is more appealing, since its absolute error is small.
However, in this case, an accurate (sparse) representation can be even more space efficient than the multiplicative approximation.
On the other hand, when many ones are expected, additive approximation gives similar outcomes to multiplicative approximation.
Furthermore, the potential space saving becomes significant in this case, motivating our exploration.

Our initial contribution is a formally proved memory lower bound of $\Omega(\frac{1}{\eps}+\logw)$ for $\weps$-additive approximations for the \bc{} problem.

Our second contribution is a space optimal algorithm providing a \emph{$\weps$-additive approximation} for the \bc{} problem.
It consumes $O(\frac{1}{\eps}+\logw)$ memory bits with a worst case time complexity of $O(1)$, matching the lower bound.

Next, we explore the more general \bs{} problem.
Here, the results are split based on the value of $\eps$.
Specifically, our third contribution is an (asymptotically) space optimal algorithm providing an $\bserror$-additive approximation for the \bs{} problem when \largeEps.\footnote{In this paper, the logarithms are of base 2 and the o(1) notation is for $W\to\infty$.}
It uses $O(\frac{1}{\eps}+\logw)$ memory bits and has $O(1)$ worst case time complexity.
For other values of $\epsilon$, we show a lower bound of $\Omega(\window\logp{\frac{1}{\weps}})$ and a corresponding algorithm requiring $O\parentheses{\window\logp{\frac{1}{2W\eps}+1}}$ memory bits
with $O(1)$ worst case time complexity. Furthermore, we show that this algorithm is succinct for $\epsilon=o(W^{-1})$, i.e. its space requirement is only \smallMultError{} times the lower bound.


To get a feel for the applicability of these results, consider for example an algorithmic trader that makes transactions based on a moving average of the gold price. He samples the spot price once every millisecond, and wishes to approximate the average price for the last hour, i.e., $\window=3.6\cdot 10^6$ samples. The current price is around $\$1200$, and with a standard deviation of $\$10$, he safely assumes the price is bounded by $\bsrange\triangleq 1500$. The trader is willing to withstand an error of $0.1\%$, which is approximately $\$1.2$. Our algorithm provides a $\window\bsrange\eps_A$ (the ‘$A$’ stands for \emph{Additive}) additive-approximation using $\parentheses{\frac{1}{2\eps_A}+2\logw}(1+o(1))$ memory bits, while the  algorithm by Datar et al. \cite{DatarGIM02} computes a multiplicative $(1+\eps_M)$ (the ‘$M$’ stands for \emph{Multiplicative}) approximation using $\ceil{\frac{1}{2\eps_M}+1}\ceil{\logp {2\window\bsrange\eps_M+1}+1}$ buckets of size $\ceil{\logw+\logp{\logw+\log\bsrange}}$ bits each.
Using our algorithm, the trader sets $\eps_A=\bsrange^{-1}=\frac{1}{1500}$, which guarantees that as long as the price of gold stays above $\$1000$, the error remains lower than required.
The multiplicative approximation algorithm requires setting $\eps_M=0.1\%$, and uses $501\cdot\ceil{\logp {1080001}+1}=12525$ buckets of size $27$ bits each and about $41\mathit{KB}$ overall.
In comparison, our algorithm with the parameters above requires only about $100$ bytes.

Another useful application for our algorithm is counting within a \emph{fixed} additive error.
The straight-forward algorithm for solving \bc{} uses a $\window$-bits array which stores the entire window, replacing the oldest recorded bit with a new one whenever such arrives.
Assume a $\pm 5$ error is allowed. Using the multiplicative-approximation algorithms, one has to set $\eps_M=\frac{5}{\window}$, which requires more than $W$ bits, worse than exact counting.
In contrast, setting $\eps_A=\frac{5}{\window}$ for our algorithm reduces the memory consumption of the exact solution by nearly $90\%$.

In summary, we show that additive approximations offer significant space reduction opportunities.
They can be obtained with a constant worst case time complexity, which is important in real-time and time sensitive applications.

\section{Related Work}

In \cite{DatarGIM02}, Datar et al. first presented the problem of counting the number of $1$'s in a sliding window of size $W$ over a binary stream, and its generalization to summing a window over a stream of integers in the range $\set{0,1,\ldots,R}$. They have introduced a data structure called \emph{exponential histogram} $(\mathit{EH})$. $\mathit{EH}$ is a time-stamp based structure that partitions the stream into \emph{buckets}, saving the time elapsed since the last $1$ in the bucket was seen. Using $\mathit{EH}$, they have derived a space-optimal algorithm for approximating \bs{} within a multiplicative-factor of $(1+\eps)$, which uses $O\parentheses{\frac{1}{\eps}\log^2\window +\log\bsrange\cdotpa{\logw+\log\log\bsrange}}$ memory bits.
The structure allows estimating a class of aggregate functions such as counting, summing and computing the $\ell_1$ and $\ell_2$ norms of a sliding window in a stream containing integers. The exponential histogram technique was later expanded \cite{BabcockDMO03} to support computation of additional functions such as $k$-median and variance. Gibbons and Tirthapura \cite{GibbonsT02} presented a different structure called \emph{waves}, which improved the worst-case runtime of processing a new element to a constant, keeping space requirement comparable when $R=poly(W)$. Braverman and Ostrowsky~\cite{SmoothHistograms} defined \emph{smooth histogram}, a generalization of the exponential histogram, which allowed estimation of a wider class of aggregate functions and improved previous results for several functions such as $l_p$ norms and frequency moments. Lee and Ting~\cite{LeeT06} presented an improved algorithm, requiring less space if a $(1+\eps)$ approximation is guaranteed only when the ones consist of a significant fraction of the window. They also presented the $\lambda$ counter~\cite{lambdaCounter} that counts bits over a sliding window as part of a frequent items algorithm.
Our design is more space efficient as it requires
$O(\frac{1}{\varepsilon} + log(n))$ bits instead of  $O(\frac{1}{\varepsilon}\cdot log(n))$ bits.

In~\cite{CohenS06}, Cohen and Strauss considered a generalization of the bit-counting problem on a sliding window for computing a weighted sum for some decay function, such that the more recent bits have higher weights. Cormode and Yi~\cite{DistributedAggregates} solved bit counting in a distributed setting with optimal communication between nodes. Table 1 and
Table 2 summarize previous works on the \bc{} and \bs{} problems and compare them to our own algorithms.

Extensive studies were conducted on many other streaming problems over
sliding windows such as Top-K~\cite{TopKSlidingWindow,TopK2}, Top-K tuples~\cite{TopKPairsSlidingWindows},
Quantiles~\cite{Quantitiles}, heavy hitters~\cite{HHINFOCOM,L2HeavyHitters,HeavyHitters}, distinct items ~\cite{DistinctSlidingWindow}, duplicates~\cite{duplicate}, Longest Increasing Subsequences~\cite{SmoothHistograms,LIS}, Bloom filters~\cite{SlidingBloomInfocom,SlidingBloomFilter}, graph problems~\cite{WeightedMatching,Graphs} and more.

\section{Basic-Counting Problem}\label{sec:basic-counting}

\begin{table*}[t]
\centering
\footnotesize
\centering{
\tabcolsep=0.11cm
\begin{tabular}{|p{\widthof{Tirthapura [11]}}|p{\widthof{ Approximation Guarantee }}|p{\widthof{Requirement......}}|p{\bcNarrowTableColumnWidth}|p{\bcNarrowTableColumnWidth}|p{\widthof{Additive.}}|}
	\hline
	\bc{}& Approximation Guarantee & Memory \mbox{Requirement} & Amortized Addition Time & Worst-Case Addition Time & Maximal Additive Error \\ \hline
	Datar et al.~\cite{DatarGIM02}& $\begin{aligned}(1+\eps)\end{aligned}$-Multiplicative & $O\bigg(\frac{1}{\eps}\log^2\weps\bigg)$ & $O(1)$ & $O(\logw)$ &$\weps$\\ \hline
	Gibbons~and Tirthapura~\cite{GibbonsT02}& $\begin{aligned}(1+\eps)\end{aligned}$-Multiplicative & $O\bigg(\frac{1}{\eps}\log^2\weps\bigg)$ & $O(1)$ & $O(1)$ &$\weps$\\ \hline
	Lee and Ting~\cite{LeeT06}& $\begin{aligned}(1+\eps)\end{aligned}$-Multiplicative, whenever there are at least
	$\theta\window$~1-bits &
	$\begin{aligned}
		O\bigg(&\frac{1}{\eps}\log^2{\frac{1}{\theta}}\\
		+& \log\window\theta\epsilon\bigg)
	\end{aligned}$ & $O(1)$ & $O(1)$ &$\weps$\\ \hline	
	This Paper& $\weps$-Additive &
	$O\bigg(\frac{1}{\eps}+\log\weps\bigg)$ &
	$O(1)$ & $O(1)$&$\weps$\\ \hline	
\end{tabular}\smallskip}
\normalsize
\caption{Comparison of \bc{} Algorithms.}
\label{tab:bc}
\end{table*}
\begin{table*}[tp]
	\centering
	\footnotesize
	\tabcolsep=0.11cm
	\begin{tabular}{|p{\widthof{Tirthapura [12]}}|p{\widthof{Approximation}}|p{\widthof{Memory Requirement............} }|p{\bsTableColumnWidth}|p{\widthof{Addition~Time}}|p{\widthof{ Additive }}|}
		\hline
		\bs{}& Approximation Guarantee & Memory Requirement & Amortized Addition Time   			& Worst-Case Addition Time	& Maximal Additive Error			 \\ \hline
		Datar et al.~\cite{DatarGIM02}& $\begin{aligned}(1+\eps)\end{aligned}$-Multiplicative &
		$\begin{aligned}
		&O\left(\frac{1}{\eps}\big(\log^2\window\right .\\
		&+\log\bsrange\logw \\
		&+\log\bsrange\log\log\bsrange\big)\bigg)
		\end{aligned}$
		& $O\parentheses{\frac{\log\bsrange}{\logw}}$ & $\begin{aligned}O(&\logw\\+&\log\bsrange)\end{aligned}$ & $\bserror$
		\\ \hline
		Gibbons~and Tirthapura~\cite{GibbonsT02}& $\begin{aligned}(1+\eps)\end{aligned}$-Multiplicative &
		$\begin{aligned}
		&O\left(\frac{1}{\eps}\parentheses{\logw+\log\bsrange}^2\right)
		\end{aligned}$
		& $O(1)$ & $O(1)$ & $\bserror$
		\\ \hline
		\multirow{5}{*}{This Paper}& $\bserror$-Additive for $\epsilon\ge\frac{1}{2\window}$ & \raisebox{-0.4cm}{$O\left(\frac{1}{\eps}+\logw\right)$} & \multirow{5}{*}{$O(1)$} & \multirow{5}{*}{$O(1)$}& \multirow{5}{*}{$\bserror$}\\
		\hhline{~--}& $\bserror$-Additive for $\epsilon\le\frac{1}{2\window}$ & \raisebox{-0.4cm}{$O\parentheses{\smallEpsMemoryConsumption}$} & & & \\ \hline		
	\end{tabular}\smallskip
	\normalsize
	\caption{Comparison of \bs{} Algorithms.}
	\label{tab:bs}
\end{table*} 

\begin{definition}[Approximation]
Given a value $V$ and a constant $\epsilon$, we say that $\widehat{V}$ is an \emph{$\epsilon$-multiplicative approximation} of $V$ if $|V - \widehat{V}| < \epsilon V$.
We say that $\widehat{V}$ is an \emph{$\epsilon$-additive approximation} of $V$ if $|V - \widehat{V}| < \epsilon$.
\end{definition}

\begin{definition}[\bc]
Given a stream of bits and a parameter $W$, maintain the number of $1$'s in the last $W$ bits of the stream.
Denote this number by $C^W$.
\end{definition}

\subsection{Lower Bound}

We now show lower bounds for the memory requirement for approximating \textsc{Basic-Counting}.
\begin {lemma}\label{lem:approx-counting-1}
For any $\eps$ and $W$, any deterministic algorithm that provides a $W\epsilon$-additive approximation for \textsc{Basic-Counting} requires at least
$\left\lfloor\frac{W}{\left\lfloor2W\epsilon+1\right\rfloor}\right\rfloor \ge \floor{\frac{1}{2\epsilon+W^{-1}}}$ bits.
\end{lemma}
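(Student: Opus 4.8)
The plan is to use a standard adversary / counting argument based on distinguishing "hard" inputs by their stored state. Fix $\eps$ and $W$, and set $\ell = \floor{2W\eps+1}$, so that the claimed lower bound is $\floor{W/\ell}$. The key idea is to identify a family of inputs that all end at the same point in time, pairwise force the algorithm into distinct memory states, and yet are indistinguishable unless those states differ. The number of such inputs then lower-bounds the number of reachable memory configurations, hence the bit count.

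First I would build the family of inputs. Partition the window into $k = \floor{W/\ell}$ consecutive blocks of length $\ell$ each (discarding any leftover bits at the far end, which is why we take a floor). For each index $j \in \set{0,1,\ldots,k}$, define an input $\sigma_j$ in which the first $j$ blocks (the oldest ones, furthest from the present) are all-$1$ and the remaining blocks inside the window are all-$0$; equivalently, $\sigma_j$ is a stream whose last $W$ bits consist of $j\ell$ ones followed by $W - j\ell$ zeros. The true count $C^W$ on $\sigma_j$ is exactly $j\ell$, and as $j$ ranges over $\set{0,\ldots,k}$ these values are spaced $\ell$ apart. Since a $W\eps$-additive approximation has total uncertainty window strictly less than $2W\eps \le \ell$, no single returned value can be a valid answer for two different $\sigma_j$'s: $|j\ell - j'\ell| \ge \ell > 2W\eps$ for $j \ne j'$. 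Hence the algorithm, after reading $\sigma_j$ versus $\sigma_{j'}$, must be in different memory states — otherwise it would produce the same answer and be wrong on at least one. That gives at least $k+1$ distinct states, so at least $\ceil{\log(k+1)} \ge \log k$ bits; but this only yields a logarithmic bound, so I need the sharper argument below.

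To get the \emph{linear-in-$k$} bound $\floor{W/\ell}$ bits (not $\log$ of it), I would instead run the classic sliding-window fooling-set argument: consider all $2^k$ inputs of the form "one full window built block-by-block, each block either all-$0$ or all-$1$", i.e. for a bit-vector $a \in \set{0,1}^k$ feed $\ell$ copies of $a_1$, then $\ell$ copies of $a_2$, …, then $\ell$ copies of $a_k$. Suppose two distinct vectors $a \ne a'$ lead to the same memory state after being fed. Pick a coordinate $t$ where they differ, say $a_t = 1, a'_t = 0$. Now append a common suffix that pushes exactly the blocks $1,\ldots,t-1$ out of the window and leaves blocks $t,\ldots,k$ plus the new suffix inside — concretely append $\ell$ copies of some fixed symbol repeated $(t-1)\ell$ times' worth, chosen so the surviving window under $a$ and under $a'$ differ in count by exactly the contribution of block $t$, which is $\ell$. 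Since the post-suffix states are still equal (same state before suffix, same suffix, deterministic algorithm), the algorithm answers identically, but the two true counts differ by $\ell > 2W\eps$, so one answer violates the $W\eps$-additive guarantee. Therefore all $2^k$ inputs reach distinct states, forcing at least $k = \floor{W/\ell}$ bits.

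The main obstacle is making the suffix bookkeeping in the last step airtight: I must choose the common suffix length so that, simultaneously, (i) blocks $1,\ldots,t-1$ have fully exited the window, (ii) block $t$ and everything newer is still fully inside, and (iii) the suffix content is identical in both runs so the equal-state hypothesis propagates. Because block $t$ can be any coordinate, the cleanest route is to handle the worst case $t=1$ (suffix of length $0$ already distinguishes, giving difference $\ell$ between the two full windows) — but for general $t$ one appends $(t-1)\ell$ fixed bits (say all $0$); then blocks $1,\ldots,t-1$ of \emph{both} streams leave, block $t$ of $a$ contributes $\ell$ ones while block $t$ of $a'$ contributes $0$, and blocks $t+1,\ldots,k$ contribute the same amount in both, so the count gap is exactly $\ell$. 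One then checks $\ell = \floor{2W\eps+1} > 2W\eps$, and separately that $\floor{W/\floor{2W\eps+1}} \ge \floor{1/(2\eps + W^{-1})}$, which is the elementary inequality $\floor{W/\floor{x+1}} \ge \floor{1/(x/W + 1/W)}$ with $x = 2W\eps$ — a short floor manipulation I would relegate to a one-line computation.
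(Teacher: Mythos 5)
Your fooling-set argument is essentially the paper's own proof: the same family of $2^{\left\lfloor W/\left\lfloor 2W\epsilon+1\right\rfloor\right\rfloor}$ inputs built from all-zero/all-one blocks of length $\left\lfloor 2W\epsilon+1\right\rfloor$, and the same collision-plus-common-zero-suffix contradiction forcing distinct configurations. The one point to tighten is that $t$ must be chosen as the \emph{last} (most recent) coordinate on which $a$ and $a'$ differ --- the paper sets $t=\max\{\tau \mid w_\tau^1\neq w_\tau^2\}$ --- since for an arbitrary differing coordinate your claim that blocks $t+1,\dots,k$ contribute equally to both windows need not hold and the count gap could cancel.
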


\begin{proof}
Denote $z\triangleq\left\lfloor\frac{W}{\left\lfloor2W\epsilon+1\right\rfloor}\right\rfloor$.
We prove the lemma by showing $2^z$ arrangements that must lead to different configurations.
Consider the language of all concatenations of $z$ blocks of size $\left\lfloor2W\epsilon+1\right\rfloor$, such that each block consists of only ones or only zeros:
$$
L_{W,\epsilon}=\{w_0w_1\cdots w_{z-1} \mid 
\forall j\in [z]: w_j=0^{\left\lfloor2W\epsilon+1\right\rfloor}
\vee w_j=1^{\left\lfloor2W\epsilon+1\right\rfloor}\} 
$$
Assume, by way of contradiction, that two different words
$$s^1=w_0^1w_1^1\cdots w_{z-1}^1, s^2=w_0^2w_1^2\cdots w_{z-1}^2\in L_{W,\epsilon}$$
lead the algorithm to the same configuration.
Denote the index of the last block that differs between $s^1$ and $s^2$
by $t \triangleq max\{\tau\mid w_\tau^1\neq w_\tau^2\}.$
Next, consider the sequences
$s^1\cdot 0^{(t-1)\left\lfloor 2W\epsilon+1\right\rfloor}$ and
$s^2\cdot 0^{(t-1)\left\lfloor 2W\epsilon+1\right\rfloor}.$
The algorithm must reach the same configuration after processing these sequences, even though the number of ones differs by $\left\lfloor2W\epsilon+1\right\rfloor>2W\epsilon$.
Therefore, the algorithm's error must be greater than $W\epsilon$ at least for one of the sequences, in contradiction to the assumption.
We have shown $2^z$ words that lead to different configurations and
therefore any deterministic algorithm that provides $\epsilon-additive$ approximation to \textsc{Basic-Counting} must have at least $z$ bits of state.
%
\end{proof}

An immediate corollary of Lemma~\ref{lem:approx-counting-1} is that any exact algorithm for \textsc{Basic-Counting} requires at least $W$ bits, i.e., the naive solution is optimal.
We next establish a second lower bound, which is useful for proving that our algorithm, presented below, is space optimal up to a constant factor.

\begin {lemma}\label{lem:approx-counting-2}
Fix some $\epsilon\leq \frac{1}{4}$.
Any deterministic algorithm that provides a $W\epsilon$-additive approximation for the \textsc{Basic-Counting} problem requires at least $\flogw$ bits.
\end{lemma}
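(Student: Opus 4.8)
The plan is to exhibit roughly $W$ inputs that must all lead the algorithm to distinct configurations, which forces $\log W$ bits of state. The natural candidates are the prefixes/thresholds where the true count $C^W$ jumps by a large amount, but since we only have a $W\epsilon$-additive guarantee with $\epsilon \le 1/4$, we should instead compare inputs whose windows differ in count by strictly more than $2W\epsilon$. Concretely, I would consider the family of streams $1^j 0^{W-j}$ for $j$ ranging over a suitable arithmetic progression: after feeding $1^j$ and then padding with enough zeros to fill the window, the true count is determined by $j$. If two values $j_1 < j_2$ with $j_2 - j_1 > 2W\epsilon$ led to the same configuration, then after appending the same suffix of zeros (enough to slide the window so it contains only the padding region that distinguishes the two), the algorithm would be in the same state facing two true answers differing by more than $2W\epsilon$, hence its error exceeds $W\epsilon$ on one of them — a contradiction, exactly as in the proof of Lemma~\ref{lem:approx-counting-1}.

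The key steps, in order, are: (1) fix a spacing $\Delta = \lfloor 2W\epsilon\rfloor + 1 > 2W\epsilon$ and consider the $\lceil W/\Delta\rceil$ streams whose windows have counts $0, \Delta, 2\Delta, \ldots$; (2) argue that any two of these must reach different configurations, by the appended-zeros sliding argument above; (3) conclude that the number of reachable configurations is at least $\lceil W/\Delta\rceil \ge W/(2W\epsilon+1)$, so the state size is at least $\log\bigl(W/(2W\epsilon+1)\bigr)$ bits; (4) finally plug in $\epsilon \le 1/4$, so that $2W\epsilon + 1 \le W/2 + 1 \le W$ (for $W\ge 2$), giving $W/(2W\epsilon+1) \ge 1$ trivially — that is too weak, so instead I should bound $2W\epsilon+1 \le W/2+1 \le \tfrac{3}{4}W$ for $W \ge 4$, yielding a ratio $\ge 4/3$, still too weak. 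The right move is to iterate or to choose the progression length more carefully: with $\epsilon\le 1/4$ we have $\Delta \le \lfloor W/2 \rfloor + 1$, so $\lfloor W/\Delta\rfloor \ge 2 - o(1)$, which only gives one bit.

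Since one application of Lemma~\ref{lem:approx-counting-1} alone gives $\lfloor \tfrac{1}{2\epsilon + W^{-1}}\rfloor$ bits and this is $\Theta(1)$ when $\epsilon = \Theta(1)$, the genuine source of the $\log W$ term must be a different, multiplicative/recursive construction: I would build the hard instances so that the algorithm must distinguish counts at \emph{geometrically} spaced scales. Specifically, after reading a block of $1$'s of length $\ell$ followed by zeros, the window count can be made to equal any value in $\{0,\ell\}$ by choosing how many trailing zeros to append; doing this at scales $\ell = W, W/2, W/4, \ldots, 1$ and recursively composing (an adversary argument / decision-tree over $\flogw$ rounds) yields $2^{\flogw}$ reachable configurations. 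The main obstacle will be step (2) in this recursive form: ensuring that the padding needed to "reset" the window at one scale does not destroy the distinctions established at coarser scales, i.e. carefully interleaving the $1$-blocks and $0$-padding so that at the moment of each query the window simultaneously certifies the choice made at every scale. Once that bookkeeping is done, counting leaves of the decision tree gives the $\flogw$-bit lower bound; the $\epsilon \le 1/4$ hypothesis is exactly what guarantees each binary choice shifts the true count by more than $2W\epsilon$ at its scale.
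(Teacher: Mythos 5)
Your first attempt correctly diagnoses why a Lemma~\ref{lem:approx-counting-1}-style packing cannot work here: for constant $\epsilon$ only $O(1)$ pairwise-distinguishable count values fit in $[0,W]$, so no family of inputs separated by their \emph{current} required answers can certify more than $O(1)$ bits. But the recursive multi-scale construction you fall back on has the same fatal flaw, just pushed one level down: at scale $\ell$ the two branches of your binary choice change the true count by about $\ell$, and for every scale with $\ell\le 2W\epsilon$ (which, for $\epsilon=\Theta(1)$, is all but the one or two coarsest scales) the two branches admit a common valid answer, so nothing compels the algorithm to keep them in distinct states. Your closing claim that ``$\epsilon\le 1/4$ is exactly what guarantees each binary choice shifts the true count by more than $2W\epsilon$ at its scale'' is false for every $\ell<W/2$. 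As written, the proposal therefore establishes nothing beyond the $O(1)$ bits you already conceded were too weak, and the flagged ``bookkeeping obstacle'' is not bookkeeping --- it is the absence of a working mechanism for separating states.

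The missing idea is to separate configurations not by the answers they must give \emph{now} but by the answers they must give after a common \emph{continuation} --- a pumping argument. The paper runs the algorithm on $0^W\cdot 1^{2^m}$: with $m$ bits the state must repeat among the prefixes $0^W1^y$, say first at some $y<2^m$ with cycle length $z$. At that first visit the window holds $y$ ones, so the answer $a_c$ attached to that state satisfies $a_c\le y+W\epsilon$; pumping the cycle (take $q$ with $y+qz\ge W$) returns to the same state with an all-ones window, forcing $a_c\ge W-W\epsilon$. Hence $2^m>y\ge W(1-2\epsilon)\ge W/2$, giving $m\ge\flogw$. Note that this uses only two distinct \emph{answers} (``about $y$'' versus ``about $W$'') yet certifies roughly $W/2$ distinct \emph{states}, because the algorithm must track how far it is from filling the window --- information that is invisible to any argument comparing counts at a single query time, which is exactly why both of your constructions stall at $O(1)$ bits.
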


\begin{proof}
Assume that some algorithm $A$ gives a $W\epsilon$-additive approximation using $m$ memory bits.
Consider $A$'s run on the sequence $s=0^W\cdot 1^{2^m}$.
Since $A$ is using $m$ bits, it reaches some memory configuration $c$ at least twice after processing the zeros in the sequence.
Assume that $A$ first reached $c$ after seeing $0^W\cdot 1^y$ (where $y<2^m$).
This means that $A$ must output some number $a_c\leq y+W\epsilon$ if queried.
Now assume $A$ returns to configuration $c$ after reading $z$ additional ones.
This means $A$ will return to $c$ after every additional sequence of $z$ ones.
Therefore, for every integer $q$, after processing the sequence $0^W\cdot 1^{y+qz}$, $A$ will reach configuration $c$.
We can then pick a large $q$ (such that $y+qz\geq W$), which means that the query answer for configuration $c$, $a_c$, has to be at least $W(1-\epsilon)$, as the window is now all-ones.
We get
$W(1-\epsilon)\leq a_c\leq y+W\epsilon$
and thus $2^m > y \geq W(1-2\epsilon)$.
Putting everything together, we conclude that
$m>\log \left(W(1-2\epsilon)\right)=\log W + \log\left(1-2\epsilon\right)\geq \log W - 1$, for $\epsilon\le\frac{1}{4}$.
Finally, since $m$ is an integer, this implies $m\ge\flogw$.
\end{proof}

\begin{theorem}\label{thm:BasicCountingLB}
Let $\eps\le\frac{1}{4}$. Any deterministic algorithm that provides a $W\epsilon$-additive approximation for the \textsc{Basic-Counting} problem requires at least
$\floor{\lowerbound}$~bits.
\end{theorem}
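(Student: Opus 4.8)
The plan is simply to stitch together the two lower bounds already established. Lemma~\ref{lem:approx-counting-1} shows, with no restriction on $\epsilon$, that any deterministic $W\epsilon$-additive algorithm for \textsc{Basic-Counting} needs at least $\floor{\frac{W}{\floor{2W\epsilon+1}}}\ge\floor{\frac{1}{2\epsilon+W^{-1}}}$ bits. Lemma~\ref{lem:approx-counting-2}, which applies precisely in the regime $\epsilon\le\frac14$ assumed by the theorem, shows the same algorithm needs at least $\flogw$ bits. Since a single fixed algorithm must simultaneously obey both constraints, the number $m$ of state bits it uses satisfies $m\ge\max\set{\flogw,\ \floor{\frac{1}{2\epsilon+W^{-1}}}}$.

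The only remaining step is to rewrite this maximum of two floors as the floor of the maximum, i.e.\ to observe that $\max\set{\floor{a},\floor{b}}=\floor{\max\set{a,b}}$ for all reals $a,b$. This holds because $x\mapsto\floor{x}$ is non-decreasing: if $a\ge b$ then $\floor{a}\ge\floor{b}$, so both sides equal $\floor{a}$, and symmetrically if $b\ge a$. Applying this with $a=\log W$ and $b=\frac{1}{2\epsilon+W^{-1}}$ turns the bound into $m\ge\floor{\max\set{\log W,\frac{1}{2\epsilon+W^{-1}}}}=\floor{\lowerbound}$, which is exactly the claim.

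I do not expect any real obstacle here; the content of the theorem lives entirely in Lemmas~\ref{lem:approx-counting-1} and~\ref{lem:approx-counting-2}, and the present statement is just their common consolidation. The one thing worth double-checking while writing it up is that the hypotheses align: Lemma~\ref{lem:approx-counting-2} is invoked only under $\epsilon\le\frac14$, which is exactly the hypothesis of the theorem, and Lemma~\ref{lem:approx-counting-1} needs nothing, so the combined bound is valid on the full range $\epsilon\le\frac14$ (and, for the first term alone, beyond it).
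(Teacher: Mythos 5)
Your proposal is correct and matches the paper exactly: the paper's proof of this theorem is simply ``Immediate from Lemmas~\ref{lem:approx-counting-1} and~\ref{lem:approx-counting-2},'' which is precisely the consolidation you carry out. The extra observation that $\max\set{\floor{a},\floor{b}}=\floor{\max\set{a,b}}$ is a valid and harmless elaboration of a step the paper leaves implicit.
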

\begin{proof}
Immediate from lemmas \ref{lem:approx-counting-1} and \ref{lem:approx-counting-2}.
\end{proof}

Finally, we extend our lower bounds to randomized algorithms.

\begin{theorem}\label{thm:RandomizedBasicCountingLB}
	Let $\eps\le\frac{1}{4}$. Any randomized Las Vegas algorithm that provides a $W\epsilon$-additive approximation for the \textsc{Basic-Counting} problem requires at least
	$\floor{\lowerbound}$ bits. Further, for any fixed $\delta\in(0,1/2)$, any Monte Carlo algorithm that  with probability at least $1-\delta$ approximates \bc{} within $W\epsilon$ error at any time instant, requires $\Omega(\frac{1}{\eps}+\logw)$ bits.
\end{theorem}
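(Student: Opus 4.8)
The plan is to reduce both claims to the deterministic lower bounds already established in Theorem~\ref{thm:BasicCountingLB}, using standard arguments that convert randomized algorithms into deterministic ones at the cost of the claimed bounds. For the Las Vegas case, observe that a Las Vegas algorithm always returns a correct $W\epsilon$-additive approximation; only its running time (or here, the sequence of internal coin tosses) is randomized. Fix any coin sequence $\rho$ in the support of the algorithm's randomness; conditioning on $\rho$ yields a deterministic algorithm that is still correct on every input, and whose memory usage is at most the worst-case memory of the randomized algorithm. Applying Theorem~\ref{thm:BasicCountingLB} to this deterministic algorithm gives the $\floor{\lowerbound}$ bound, and since $\rho$ was arbitrary, the randomized algorithm needs that many bits as well.

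For the Monte Carlo case, the argument is more delicate because fixing the randomness no longer guarantees correctness on all inputs — only on a $1-\delta$ fraction in expectation. First I would invoke the (easier) ``$\frac{1}{\eps}$'' branch: reconsider the family $L_{W,\epsilon}$ of $2^z$ words from Lemma~\ref{lem:approx-counting-1}, where $z = \floor{\frac{W}{\floor{2W\epsilon+1}}}$. I claim that with probability $\ge 1-\delta$ (over the algorithm's coins), the internal configuration after processing $w \in L_{W,\epsilon}$ followed by the padding $0^{(t-1)\floor{2W\epsilon+1}}$ determines, for each block index, whether that block was all-zeros or all-ones: the distinguishing-sequences argument in Lemma~\ref{lem:approx-counting-1} shows that two words agreeing up to their last disagreement at block $t$ force an error exceeding $W\epsilon$ on a $0$-padded continuation if they reach the same configuration. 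Thus by a counting/entropy argument — the configuration reached on a uniformly random word from $L_{W,\epsilon}$ must, with the success probability, carry $\log|L_{W,\epsilon}| = z$ bits of information — the state space has size $\ge 2^{z(1-o(1))}$ or, via a cleaner union-bound over the $z$ ``block queries'', at least $\Omega(z) = \Omega(1/\epsilon)$ bits. For the ``$\log W$'' branch, I would similarly re-run the pigeonhole argument of Lemma~\ref{lem:approx-counting-2} on $0^W \cdot 1^{2^m}$, now noting that with the Monte Carlo guarantee the collision-and-loop phenomenon still forces $2^m \ge W(1-2\epsilon)$ except on a $\delta$-fraction of coin outcomes, yielding $m = \Omega(\log W)$. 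Combining the two branches gives $\Omega(\frac{1}{\eps} + \logw)$.

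The main obstacle is making the Monte Carlo argument precise: unlike the Las Vegas case, one cannot simply fix the coins and inherit a fully correct deterministic algorithm, so I must argue directly about the distribution of internal configurations. The cleanest route is probably a one-way-communication / fooling-set framing: the $2^z$ words of $L_{W,\epsilon}$ constitute a fooling set in the sense that any two of them require distinguishable ``continuations,'' and a randomized protocol succeeding with probability $1-\delta > 1/2$ on each pair must (by a Yao-style averaging and the fact that $\delta$ is a fixed constant bounded away from $1/2$) use $\Omega(\log|L_{W,\epsilon}|) = \Omega(z)$ bits of state; the standard lemma that a $(1-\delta)$-successful one-way protocol on a fooling set of size $N$ needs $\Omega_\delta(\log N)$ bits does the work. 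One should double-check the edge cases where $\floor{2W\epsilon+1} > W$ (so $z = 0$), in which regime the $\log W$ branch alone suffices, and confirm that the hidden constant in $\Omega(\cdot)$ depends only on $\delta$, consistent with the statement.
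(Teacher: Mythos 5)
Your proposal is correct in substance and, for the Monte Carlo half, follows essentially the same route as the paper: put the uniform distribution on the words of $L_{W,\epsilon}$ padded with trailing zeros (so that a single input exercises all the distinguishing continuations of Lemma~\ref{lem:approx-counting-1} simultaneously), use Yao's minimax/averaging to extract a deterministic algorithm that is \ecor{} on a $1-2\delta$ fraction of these inputs, and count distinct configurations among the inputs on which it succeeds, giving $\logp{(1-2\delta)\cdot 2^z}=\Omega(1/\eps)$ bits since $\delta$ is a fixed constant below $1/2$; the $\Omega(\logw)$ branch is handled analogously from Lemma~\ref{lem:approx-counting-2}. Two remarks. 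First, for the Las Vegas half you derandomize by fixing the coin sequence, which is simpler than the paper's route (again Yao's minimax over the padded language) and perfectly valid, but it bounds only the \emph{worst-case} space over coin tosses, whereas the minimax argument bounds the \emph{expected} space --- the stronger statement the paper actually establishes. Second, the ``standard lemma'' you lean on at the end --- that a fooling set of size $N$ forces $\Omega_\delta(\log N)$ bits from any $(1-\delta)$-correct randomized one-way protocol --- is false in that generality (Equality has a fooling set of size $2^n$ yet admits cheap randomized one-way protocols); what rescues the argument here is precisely the Yao-averaging step you sketch, namely that the $0^W$ padding converts per-pair distinguishability into the statement that a \emph{deterministic} algorithm must reach distinct states on every padded input on which it is fully correct, after which the counting goes through. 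With that step made explicit rather than cited as a black-box lemma, your argument coincides with the paper's.
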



\begin{proof}
We say that algorithm $A$ is \ecor{} on a input instance $S$ if it is able to approximate the number of $1$'s in the last $W$ bits, at every time instant while reading $S$, to within an additive error of $\weps$.

We remind the reader that in our case, a Las Vegas (\textit{LV}) algorithm for the \bc{} approximation problem is a randomized algorithm which is \emph{always} \ecor{}. In contrast, a Monte Carlo (\textit{MC}) algorithm is a randomized procedure that is allowed to provide approximation with error larger than $\weps$, with probability at most $\delta$.

The Yao Minimax principle~\cite{minimax} implies that the amount of memory required for a deterministic algorithm to approximate a random input chosen according to a distribution $\mathbf p$ is a lower bound on the expected space consumption of a Las Vegas algorithm for the worst input.
To prove a $\floor{\frac{1}{2\epsilon+W^{-1}}}$ lower bound, we consider padding the language $L_{W,\eps}$ which is defined in Lemma~\ref{lem:approx-counting-1}.
Specifically, we define $\mathbf p$ as the uniform distribution over all inputs in the language
$$L_{LV} = L_{W,\epsilon}\cdot \set{0^W}.$$
That is, the input consist of all bit sequences in $L_{W,\eps}$, followed by a sequence of $W$ zeros.		
The trailing $0$'s are used to force the algorithm to reach distinct configurations after reading the first $W$ input bits.
As implied by the lemma, any deterministic algorithm which is always correct for a random instance requires at least $\floor{\frac{1}{2\epsilon+W^{-1}}}$ bits, as it has to arrive to a distinct state for each input $s\in L_{W,\epsilon}$.  The argument for a lower bound of $\flogw$ bits is similar.

Next, we use the Minimax principle analogue for Monte Carlo algorithms~\cite{minimax}, which states that for any input distribution $\mathbf p$ and $\delta\in [0,1/2]$, any randomized algorithm that is always (for any input) \ecor{} with probability at least $1-\delta$ uses in expectation at least half as much memory as the optimal deterministic algorithm that errs (i.e., is not \ecor{}) with probability at most $2\delta$ on a random instance drawn according to $\mathbf{p}$.
Once again, we consider $\mathbf p$ to be the uniform distribution over
$$L_{MC} = L_{W,\epsilon}\cdot \set{0^W}.$$
Since the distribution is uniform, any deterministic algorithm, which is \ecor{} with probability at least $1-2\delta$ on a random instance drawn according to $\mathbf p$, is actually \ecor{} on $1-2\delta$ fraction of the inputs. Similar to the LV case, the argument in Lemma~\ref{lem:approx-counting-1} implies that the algorithm must reach a distinct configuration after reading the first $W$ bits of each of the $(1-2\delta)\cdot|L_{MC}|$ inputs it is \ecor{} on. Consequently, the algorithm must use at least $\logp{(1-2\delta)\cdot|L_{MC}|}$ bits of memory. Applying the Minimax principle, the derived lower bound $B_{MC}$ for any MC algorithm is:
$$B_{MC}\ge \frac{1}{2}\logp{(1-2\delta)\cdot|L_{MC}|}\ge\frac{1}{2}\floor{\frac{1}{2\epsilon+W^{-1}}}+\frac{1}{2}\logp{1-2\delta}=\Omega\parentheses{\frac{1}{\eps}}$$
Once again, the case for a $\Omega(\logw)$ lower bound is based on Lemma~\ref{lem:approx-counting-2} and follows from similar arguments.
\end{proof}

\subsection{Upper Bound}

We now present an algorithm for \textsc{Basic-Counting} that provides a $W\epsilon$-additive approximation $\widehat{C^W}$ for $C^W$ over a binary stream with near-optimal memory.
Denote $k\triangleq \frac{1}{2\epsilon}$.
For simplicity, we assume that $\frac{W}{k}$ and $k$ are integers.
Intuitively, our algorithm partitions the stream into $k$ \emph{blocks} of size $W\over k$, representing each using a single bit. A set bit corresponds to a count of $\frac{W}{k}$ in the input stream, while a clear bit corresponds to a count of 0. We then use an ``optimistic'' approach to reduce the error -- the number of ones in the input stream not counted using the bit array is \emph{propagated} to the next block; this means a block might be represented with $1$, even if it contains only a single set bit. Surprisingly, we show that this approach allows us to keep the error bounded and that the errors do not accumulate.
We keep a counter $y$ for the number of $1$s.
At the end of a block, if $y$ is larger than $W\over k$, we mark the current block and subtract $W\over k$ from $y$, propagating the remainder to the next block.
Our algorithm answers queries by multiplying the number of marked blocks in the current window by $W\over k$, making corrections to reduce the error.
%
We maintain the following variables:
\begin{description}
\item[$y$] - a counter for the number of ones.
\item[$b$] - a bit-array of size $k$.
\item[$i$] - the index of the ``oldest'' block in $b$.
\item[$B$] - the sum of all bits in $b$.
\item[$m$] - a counter for the current offset within the block.
\end{description}
\normalsize
Every arriving bit is handled as follows: We increment $m$, and if the bit is set we also increment $y$.
At the end of a block, we check if $y$ exceeds $W \over k$. If so, we subtract $W \over k$ from $y$ and set the bit $b_i$.
This way, the reduction in $y$ is compensated for by the newly set bit in $b$.
The previous value of $b_i$, holding information about $1$s that just left the window, is forgotten.

To answer a query the algorithm returns the number of set bits in $b$ multiplied by the block size $W \over k$.
We then add the value of $y$, which represents the number of ones not yet recorded in $b$, and subtract $m\cdot b_i$, as $m$ bits of the oldest recorded block have already left the window.
Finally, we remove any bias from the estimation by subtracting $W \over 2k$ (half a block).

In order to answer queries without iterating over $b$, we maintain another variable $B$, which keeps track of the number of ones in $b$.
The entire pseudo-code is given in Algorithm~\ref{alg:BasicCounting}.

\begin{algorithm}[tb]
\caption{Additive Approximation of Basic Counting}\label{alg:BasicCounting}
\begin{algorithmic}[1]
\State Initialization: $y = 0, b = 0, m = 0, i=0$.
\Function{add}{Bit $x$}
\If {$m = {W\over k} - 1$}
	\State $B = B - b_i$
	\If {$y+x\geq {W\over k}$}
		\State $b_i = 1$
		\State $y = y - {W\over k} + x$
	\Else
		\State $b_i = 0$
		\State $y = y + x$
	\EndIf
	\State $B = B + b_i$
	\State $m = 0$
	\State $i = i+1 \mod k$	
\Else
	\State $y = y + x$
	\State $m = m + 1$
\EndIf
\EndFunction

\Function{Query}{}
\State \Return {${W\over k} \cdot B + y  - \frac{W}{2k}- m\cdot b_i$}
\EndFunction
\end{algorithmic}
\end{algorithm}


\begin{theorem} \label{thm:approximation-ratio-bc}
Algorithm~\ref{alg:BasicCounting} provides a $W\epsilon$-additive approximation of \bc{}.
\end{theorem}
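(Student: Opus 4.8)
I would argue at the level of blocks rather than individual bits. Partition the stream into consecutive blocks of size $\frac{W}{k}$; let $c_j$ be the number of $1$'s in block $j$, let $Y_j$ be the value of $y$ at the instant block $j$ begins (so $Y_0=0$), and let $\beta_j\in\{0,1\}$ be the bit that Algorithm~\ref{alg:BasicCounting} writes to $b$ when it finishes block $j$. Reading off the \textsc{add} routine, every bit of block $j$ is added into $y$, and at the block's end exactly $\frac{W}{k}$ is subtracted precisely when the accumulated value $Y_j+c_j$ is at least $\frac{W}{k}$; hence $\beta_j=1$ iff $Y_j+c_j\ge \frac{W}{k}$, and $Y_{j+1}=Y_j+c_j-\beta_j\frac{W}{k}$, while the variable $B$ always equals the sum of the bits currently stored in $b$. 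The plan has three ingredients: (i) the invariant $0\le Y_j<\frac{W}{k}$, proved by an easy induction using $0\le c_j\le\frac{W}{k}$ (if $Y_j\in[0,\frac{W}{k})$ then $Y_j+c_j\in[0,\frac{2W}{k})$, and the conditional subtraction of $\frac{W}{k}$ returns $Y_{j+1}$ to $[0,\frac{W}{k})$); (ii) a telescoping identity; and (iii) a short two-case bound.

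For (ii), summing $Y_{j+1}-Y_j=c_j-\beta_j\frac{W}{k}$ over $j<J$ yields $\sum_{j<J}c_j=Y_J+\frac{W}{k}\sum_{j<J}\beta_j$: the written bits account, in units of $\frac{W}{k}$, for all the $1$'s seen so far except a residue $Y_J$. Now consider a \textsc{Query} issued when $J$ blocks are complete and $m$ (with $0\le m\le \frac{W}{k}-1$) further bits of block $J$ have been read. Then $b_i=\beta_{J-k}$ is the bit of the oldest stored block, $B=\sum_{j=J-k}^{J-1}\beta_j$, and $y=Y_J+p$ where $p$ is the number of $1$'s among those $m$ bits; applying the telescoping identity at $J$ and at $J-k$ and subtracting gives $\frac{W}{k}B=\sum_{j=J-k}^{J-1}c_j-Y_J+Y_{J-k}$. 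The window consists of the last $\frac{W}{k}-m$ bits of block $J-k$, all of blocks $J-k+1,\ldots,J-1$, and the first $m$ bits of block $J$, so $C^W=\sum_{j=J-k}^{J-1}c_j-q+p$, where $q$ is the number of $1$'s among the first $m$ (now expired) bits of block $J-k$. Substituting everything into the returned value $\frac{W}{k}B+y-\frac{W}{2k}-m\,b_i$, all the block-count sums cancel and one is left with the clean identity
$$\widehat{C^W}-C^W=Y_{J-k}+q-m\,\beta_{J-k}-\frac{W}{2k}.$$
So the whole statement reduces to proving $0\le Y_{J-k}+q-m\,\beta_{J-k}<\frac{W}{k}$, which bounds $|\widehat{C^W}-C^W|$ by $\frac{W}{2k}=W\epsilon$.

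The hard part is this last claim, and specifically its lower bound in the case $\beta_{J-k}=1$: this is exactly the point at which the ``optimistic propagation'' (marking a block that may contain very few set bits) must be shown not to make the estimate overshoot. When $\beta_{J-k}=0$ both bounds are immediate from $0\le q\le c_{J-k}$, $Y_{J-k}\ge 0$, and $Y_{J-k}+c_{J-k}<\frac{W}{k}$. When $\beta_{J-k}=1$, the upper bound follows from $q\le m$ and $Y_{J-k}<\frac{W}{k}$; for the lower bound, observe that the $c_{J-k}-q$ surviving $1$'s of block $J-k$ sit in only $\frac{W}{k}-m$ bit positions, so $c_{J-k}-q\le \frac{W}{k}-m$, and combining this with $Y_{J-k}+c_{J-k}\ge\frac{W}{k}$ (which is what $\beta_{J-k}=1$ means) gives $Y_{J-k}+q\ge m$, i.e. $Y_{J-k}+q-m\ge 0$. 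Finally, the start-up regime $J<k$ (fewer than $W$ bits read so far) is absorbed by regarding the stream as preceded by all-zero blocks, which is consistent with the zero-initialization of $b$, so the same computation applies verbatim with $Y_{J-k}=q=\beta_{J-k}=0$ for the virtual blocks. This establishes $|\widehat{C^W}-C^W|\le W\epsilon$.
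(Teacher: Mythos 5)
Your proposal is correct and follows essentially the same route as the paper's proof: both reduce the returned value to the exact error identity $\widehat{C^W}-C^W = y_0 + (\text{ones among the $m$ expired bits}) - m\,b_i - \frac{W}{2k}$ (your $Y_{J-k}+q-m\beta_{J-k}-\frac{W}{2k}$) via the conservation law $\frac{W}{k}B + y_W = y_0 + \sum_{j=1}^{W}x_j$, and then close with the same two-case analysis on the oldest block's bit, including the same key inequality that the surviving ones of the oldest block fit into its remaining $\frac{W}{k}-m$ positions. Your block-level telescoping formulation and the explicit invariant $0\le Y_j<\frac{W}{k}$ are just a cleaner packaging of the paper's bit-level accounting, not a different argument.
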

\begin{proof}
First, let us introduce some notations used in the proof.
Assume that the index of the last bit is $W+m$, where $x_W$ is the last bit of a block and $m<{W\over k}$.
$b_i$ is considered after $W+m$ bits have been processed.
We denote $y_j$ the value of $y$ \emph{after} adding bit $j$.

The setting for the proof is given in Figure \ref{fig:basic-counting}.
We aim to approximate
\begin{align}
C^W\triangleq\sum_{j=m+1}^{W+m} x_j \label{eq:to-approximate}.
\end{align}
\begin{figure}[t!]
\centering
\includegraphics[width=0.8\linewidth,natwidth=1840,natheight=609]{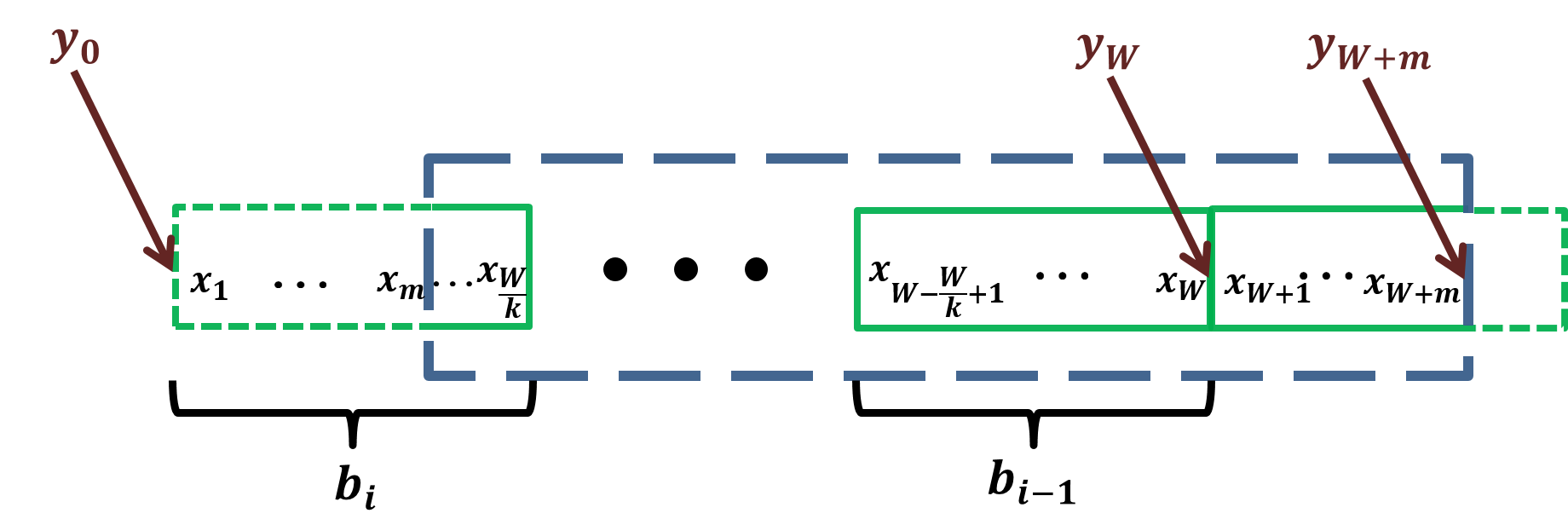}
\caption{The setting for the proof of Theorem~\ref{thm:approximation-ratio-bc}. $b$ is cyclic - $b_i$ represents the oldest block and $b_{i-1}$ the newest completed block.}
\label{fig:basic-counting}
\end{figure}
Our algorithm uses the following approximation:
\begin{align}
\widehat{C^W}
= \frac{W}{k} \cdot B + y_{W+m} - \frac{W}{2k}-m\cdot b_i 
= \frac{W}{k} \cdot B + y_W + \sum_{j=W+1}^{W+m} x_j - \frac{W}{2k}-m\cdot b_i
\label{eq:bc-approximation1}.
\end{align}
At times $1,2,\ldots ,W$, $y$ is incremented once for every set bit in the input stream.
At the end of block $j$, if $y$ is reduced by $\frac{W}{k}$, then $b_j$ is set and will not be cleared before time $W+m$.
Therefore, $\frac{W}{k} \cdot B + y_W = y_0 + \sum_{j=1}^{W} x_j.$
Substituting $\frac{W}{k} \cdot B + y_W$ in \eqref{eq:bc-approximation1}, we get
\begin{align*}
\widehat{C^W}
= y_0 + \sum_{j=1}^{W} x_j  + \sum_{j=W+1}^{W+m} x_j - \frac{W}{2k}-m\cdot b_i
= y_0 + \sum_{j=1}^{m} x_j  + \sum_{j=m+1}^{W+m} x_j - \frac{W}{2k}-m\cdot b_i.
\end{align*}

Plugging
the definition of $C^W$, we get
$\widehat{C^W} = y_0 + \sum_{j=1}^{m} x_j  + C^W - \frac{W}{2k}-m\cdot b_i$.
Therefore, the error is
$$\widehat{C^W} - C^W = y_0 + \sum_{j=1}^{m} x_j -m\cdot b_i - \frac{W}{2k}.$$
\noindent
We consider two cases:
\begin{description}
\item [$b_i = 1:$] This means that $y$ had crossed the threshold by time ${W \over k}$, i.e.
$y_0 + \sum_{j=1}^\frac{W}{k}x_j \geq \frac{W}{k}$
and equivalently $y_0 + \sum_{j=1}^{m}x_j \geq \frac{W}{k} - \sum_{j=m+1}^\frac{W}{k}x_j$.
  Thus, on one side
	\begin{align*}
	\widehat{C^W} - C^W
    = y_0 + \sum_{j=1}^{m}x_j -m - \frac{W}{2k}
    \geq \frac{W}{k} - \sum_{j=m+1}^\frac{W}{k}x_j-m - \frac{W}{2k} \\
    \geq \frac{W}{k} - \left(\sum_{j=m+1}^\frac{W}{k}1\right)-m - \frac{W}{2k}
    \geq-\frac{W}{2k}
    =-W\epsilon.
    \end{align*}
To bound the error from above we use the fact that the value of $y$ at the end of a block never exceeds $W \over k$.
This can be shown by induction, as $y$ is incremented at most $W\over k$ times during one block, and then reduced by $W \over k$ if it exceeds the block size.
Therefore,
$
\widehat{C^W} - C^W
= y_0 + \sum_{j=1}^{m} x_j -m - \frac{W}{2k} 
\leq y_0 - \frac{W}{2k}\leq\frac{W}{k} - \frac{W}{2k}= W\epsilon.\\
$
\item [$b_i = 0$:] Similarly, this means that $y$ was lower than the threshold at the end of block $i$, hence
    $y_0 + \sum_{j=1}^\frac{W}{k}x_j \leq \frac{W}{k} - 1$ or equivalently,
    $y_0 + \sum_{j=1}^{m}x_j \leq \frac{W}{k} - \sum_{j=m+1}^\frac{W}{k}x_j - 1.$
    Thus, our error is bounded from below by
    $
    \widehat{C^W} - C^W = y_0 + \sum_{j=1}^{m} x_j - \frac{W}{2k} 
    \geq y_0 - \frac{W}{2k} \geq
    - \frac{W}{2k} =
    -W\epsilon 
    $
    and from above by
    $$
    \widehat{C^W} - C^W = y_0 + \sum_{j=1}^{m} x_j - \frac{W}{2k} 
    \leq \frac{W}{k} - \sum_{j=m+1}^\frac{W}{k} x_j- \frac{W}{2k} - 1
    \leq \frac{W}{2k} - 1  \\= W\epsilon - 1.
$$
\end{description}

We have established that in all cases the absolute error is at most $W\epsilon$ as required.
\end{proof}

%

We next prove that the memory requirement of Algorithm~\ref{alg:BasicCounting} is nearly optimal.
\begin{theorem}
\label{thm:counting-memory}
Algorithm \ref{alg:BasicCounting} requires
$\frac{1}{2\epsilon} + 2\log W + O(1)$ bits of memory.
\end{theorem}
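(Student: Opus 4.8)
The plan is to bound, variable by variable, the number of bits that Algorithm~\ref{alg:BasicCounting} must store, and then sum these word lengths. Recall that $k\triangleq\frac{1}{2\epsilon}$, so the $\frac{1}{2\epsilon}$ term will come entirely from the bit-array $b$, and the $2\log W$ term will emerge by pairing the remaining (logarithmic) contributions and using $\log k+\logp{W/k}=\log W$. Concretely, I would first fix the ranges of the five state variables $y,b,i,B,m$, translate each range into a word length, and finally add them up.

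Three of the five variables are handled directly from the pseudocode. The bit-array $b$ has length exactly $k=\frac{1}{2\epsilon}$ by construction. The oldest-block index $i$ is always maintained modulo $k$, so $i\in\xrange{k}$ and fits in $\ceil{\log k}$ bits. The auxiliary counter $B$ tracks the number of set bits of $b$, so $B\in\frange{k}$ and fits in $\ceil{\logp{k+1}}\le\log k+O(1)$ bits. The within-block offset $m$ is reset to $0$ as soon as it reaches $\frac{W}{k}-1$, hence $m\in\xrange{\frac{W}{k}}$ and fits in $\ceil{\logp{W/k}}$ bits.

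The one variable that requires an actual argument is the counter $y$, and for this I would reuse the invariant already proved inside the proof of Theorem~\ref{thm:approximation-ratio-bc}: at the end of every block the value of $y$ is at most $\frac{W}{k}$ (by induction, since within a single block $y$ is incremented at most $\frac{W}{k}$ times and the threshold subtraction of $\frac{W}{k}$ is applied whenever the block size is exceeded). Consequently, throughout any block $y$ — and the transient value $y+x$ evaluated at a block boundary — never exceeds $\frac{W}{k}+\frac{W}{k}=\frac{2W}{k}$. Thus $y$ is a non-negative integer bounded by $\frac{2W}{k}$ and fits in $\ceil{\logp{\frac{2W}{k}+1}}\le\logp{W/k}+O(1)$ bits.

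Summing the five contributions, the total memory is at most
$$ k+\ceil{\log k}+\ceil{\logp{k+1}}+\ceil{\logp{W/k}}+\ceil{\logp{\tfrac{2W}{k}+1}}+O(1)\ \text{bits}. $$
Pairing the logarithmic terms gives $\ceil{\log k}+\ceil{\logp{W/k}}\le\log k+\logp{W/k}+O(1)=\log W+O(1)$, and likewise $\ceil{\logp{k+1}}+\ceil{\logp{\frac{2W}{k}+1}}=\log W+O(1)$, so the total is $k+2\log W+O(1)=\frac{1}{2\epsilon}+2\log W+O(1)$, as claimed. I do not expect a genuine obstacle here: the bookkeeping is routine, and the only step needing care is the range bound on $y$, which is essentially a restatement of the induction already carried out for Theorem~\ref{thm:approximation-ratio-bc}.
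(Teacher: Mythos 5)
Your proposal is correct and follows essentially the same route as the paper's proof: enumerate the five state variables, bound each one's range ($y$ by at most $\frac{2W}{k}$ via the end-of-block invariant, $m$ by $\frac{W}{k}$, $b$ by $k$ bits, $i$ and $B$ by $O(\log k)$ bits), and sum to get $k+2\log W+O(1)=\frac{1}{2\epsilon}+2\log W+O(1)$. The only difference is cosmetic — you make the range argument for $y$ explicit, whereas the paper states the bit allocations directly.
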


\begin{proof}
We represent $y$ using $\left\lceil 2+\log(W\epsilon)\right\rceil$ bits, $m$ using
$\left\lceil 1+log(W\epsilon)\right\rceil$ bits and $b$ using $k$ bits.
Additionally, $i$ requires $\lceil\log k\rceil$ bits, and $B$ another $\lceil\log (k+1)\rceil$ bits.
Overall, the number of bits required is
$
k + \left\lceil 2+\log(W\epsilon)\right\rceil + \left\lceil 1+\log(W\epsilon)\right\rceil + \lceil\log k\rceil + \lceil\log (k+1)\rceil\\
\le k+2\log(W\epsilon)-2\log(2\eps)+8 = \frac{1}{2\epsilon}+2\log(W)+6
= \frac{1}{2\epsilon} + 2\log W + O(1).
$
\end{proof}

Theorem \ref{thm:BasicCountingLB} shows that our algorithm uses at most twice as much memory as required by the lower bound (up to a constant number of bits) for every \emph{constant} $\epsilon\le\frac{1}{4}$.
When $\epsilon$ is not constant, our memory requirement is at most 3 times the lower bound, as shown in the following lemma.

\begin{corollary}
\label{lemma:algorithm-ratio}
For any $\epsilon\le\frac{1}{4}$, the ratio between the memory consumption of Algorithm~\ref{alg:BasicCounting} and the lower bound for additive approximations for \textsc{Basic-Counting} is
$$\frac{\frac{1}{2\epsilon}+2\log W + O(1) }{\lowerbound}=3+o(1).$$
\end{corollary}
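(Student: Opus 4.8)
The plan is to reduce the statement to an elementary one‑variable estimate. By Theorem~\ref{thm:counting-memory} the numerator equals $N \triangleq \frac{1}{2\epsilon}+2\log W+O(1)$, and by Theorem~\ref{thm:BasicCountingLB} the denominator equals $D \triangleq \floor{\lowerbound}$. Writing $u \triangleq \frac{1}{2\epsilon}$, the applicability of Algorithm~\ref{alg:BasicCounting} (integrality of $k=\frac{1}{2\epsilon}$ and of $\frac{W}{k}$) forces $2\le u\le W$, and in this range $\frac{1}{2\epsilon+W^{-1}}=\frac{uW}{u+W}$. Since $D\ge\floor{\log W}\to\infty$, the floor only costs a $(1-o(1))$ factor, so it suffices to prove $\frac{N}{D'}\le 3+o(1)$ uniformly for $D'\triangleq\max\set{\log W,\frac{uW}{u+W}}$ and that this bound is attained; then $\frac{N}{D}=\frac{N}{D'}(1+o(1))=3+o(1)$.

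For the upper bound I would fix an arbitrary $\gamma>0$, set $\tau\triangleq\gamma/3$, and split on the size of $u$. If $u\le\tau W$ then $u+W\le(1+\tau)W$, so $D'\ge\frac{uW}{u+W}\ge\frac{u}{1+\tau}$, hence $\frac{u}{D'}\le 1+\tau$; combined with $D'\ge\log W$, which gives $\frac{2\log W+O(1)}{D'}\le 2+\tau$ once $W$ is large, we get $\frac{N}{D'}\le 3+2\tau<3+\gamma$. If instead $u>\tau W$, then (using $u\le W$) $D'\ge\frac{uW}{u+W}\ge\frac{u}{2}\ge\frac{\tau W}{2}\to\infty$, so $\frac{u}{D'}\le\frac{2}{1-o(1)}\le 2+\tau$ and $\frac{2\log W+O(1)}{D'}\le\frac{2\log W+O(1)}{\tau W/2}\to 0$, whence $\frac{N}{D'}\le 2+\gamma$. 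Either way $\frac{N}{D'}\le 3+\gamma$ for all large $W$, i.e. the ratio is $\le 3+o(1)$. To see that the constant $3$ is genuinely attained (so the ``$=$'' in the corollary is justified), I would instantiate $\epsilon$ with $u=\frac{1}{2\epsilon}=\log W$ (up to rounding to keep $k$ integral, which is harmless): then $\frac{uW}{u+W}=\frac{W\log W}{W+\log W}<\log W$, so $D'=\log W$ while $N=3\log W+O(1)$, giving a ratio of $3+o(1)$ that tends to $3$ as $W\to\infty$.

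The step I expect to be the main obstacle is the two‑case estimate, because a careless term‑by‑term bound is too weak: from $D'\ge\max\set{\log W,u/2}$ one only gets $\frac{u}{D'}\le 2$ and $\frac{2\log W+O(1)}{D'}\le 2+o(1)$, i.e. a bound of $4$. The point the threshold $\tau$ exploits is that the two ``bad'' contributions are never simultaneously large: when $u$ is small relative to $W$ the correction $\frac{u}{u+W}$ is negligible, so $D'$ is essentially the full $u$ and the first term's slack disappears; when $u$ is comparable to $W$ the additive $2\log W$ is negligible against $D'=\Theta(W)$, so the second term disappears. Quantifying ``essentially/negligible'' by choosing $\tau$ in terms of the target accuracy $\gamma$ is exactly what turns the crude bound of $4$ into the sharp $3+o(1)$; the remaining manipulations are routine.
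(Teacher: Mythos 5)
Your proof is correct, and it takes a genuinely different route from the paper's. The paper splits on which term of $\lowerbound$ realizes the maximum: when it is $\log W$ the bound $3+o(1)$ falls out directly, but when it is $\frac{1}{2\epsilon+W^{-1}}$ the paper expands the product and studies the resulting function $g(W)=1+\frac{W^{-1}}{2\epsilon}+4\epsilon\log W$ by locating its extremum and checking boundary values --- a calculus step that is the delicate part of their argument. You instead split at $u\le \tau W$ with $\tau$ tied to the target accuracy $\gamma$, and in each regime bound the two numerator contributions $u$ and $2\log W+O(1)$ against whichever lower bound on the maximum is effective there; this is precisely the observation you flag, namely that the two sources of slack (the $\frac{u}{u+W}$ correction and the additive $2\log W$) are never large simultaneously. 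Your route avoids the optimization entirely, makes the uniformity in $\epsilon$ explicit, and --- unlike the paper's proof, which only establishes the upper bound --- verifies that the constant $3$ is attained at $\frac{1}{2\epsilon}=\log W$, which is what justifies the equality sign in the statement. Two cosmetic remarks: in the case $u>\tau W$ the bound $\frac{u}{D'}\le 2$ follows at once from $u\le W$, so the $\frac{2}{1-o(1)}$ detour is unnecessary; and your reduction from the floored lower bound to $D'$ is sound because both candidates in the maximum tend to infinity with $W$.
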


Since the proof is very technical, we defer it to Appendix \ref{appendix-bc-optimality}.

\section{Basic-Summing Problem}\label{sec:basic-summing}

We now consider an extension of \bc{} where elements are non-negative integers:

\begin{definition}{(\bs)}
Given a stream of elements comprising of integers in the range $[\bsrange+1]=\frange\bsrange$, maintain the sum $S$ of the last $W$ elements. 
\end{definition}

\subsection{Lower Bound}
We now show that approximating \bs{} to within an additive error of $\bserror$ requires $\Omega(\frac{1}{\eps}+\logw)$ bits for $\eps \ge \frac{1}{2\window}$ and $\Omega(\window\logp{\frac{1}{\weps}})$ bits for $\eps \le \frac{1}{2\window}$.
\begin{lemma}\label{lem:summing-lb-1}
For any $\eps\le\frac{1}{4}$, approximating \bs{} to within an additive error of $\bserror$ requires $\floor\lowerbound$ memory bits.
\end{lemma}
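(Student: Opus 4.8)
The plan is to derive this bound by a black-box reduction from \bc{}, invoking Theorem~\ref{thm:BasicCountingLB}. The key observation is that \bs{} with element range $\frange\bsrange$ contains \bc{} as a special case after rescaling: given a binary stream $x_1,x_2,\ldots$, feed the integer stream $\bsrange x_1,\bsrange x_2,\ldots$ (each element is either $0$ or $\bsrange$, hence a legal input since $\bsrange\ge 1$) to a hypothetical \bs{} algorithm $A$. Writing $\bssum$ for the window sum that $A$ must track and $C^W$ for the number of $1$'s in the last $\window$ bits of the binary stream, we have $\bssum=\bsrange\cdot C^W$ at every time instant.

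Next I would push the error guarantee through the rescaling. If $A$ produces an estimate $\bsest$ with $\abs{\bsest-\bssum}<\bserror=\bsrange\window\epsilon$, then $\bsest/\bsrange$ satisfies $\abs{\bsest/\bsrange-C^W}=\abs{\bsest-\bssum}/\bsrange<\window\epsilon$, i.e. it is a $\weps$-additive approximation of $C^W$. The simulation stores no state of its own — on each input bit it deterministically forwards $0$ or $\bsrange$ to $A$, and on a query it returns $A$'s answer divided by $\bsrange$ — so the number of reachable memory configurations of the resulting \bc{} algorithm equals that of $A$.

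Finally, since $\epsilon\le\frac{1}{4}$, Theorem~\ref{thm:BasicCountingLB} states that any such deterministic \bc{} algorithm needs at least $\floor{\lowerbound}$ bits, and therefore so does $A$. Equivalently, one can rerun the block argument of Lemma~\ref{lem:approx-counting-1} and the doubling argument of Lemma~\ref{lem:approx-counting-2} directly on the $\set{0,\bsrange}$-valued inputs, which is why the statement reproduces the \bc{} bound verbatim. I do not expect a genuine obstacle: the only points needing care are that the $\set{0,\bsrange}$ elements are admissible \bs{} inputs, that the rescaling is exact so the additive error divides out cleanly, and that the reduction adds no memory — all immediate. The substantive new content for \bs{} lies instead in the complementary $\eps\le\frac{1}{2\window}$ lower bound of $\Omega\parentheses{\window\logp{\frac{1}{\weps}}}$, where this rescaling is too lossy and a different argument is required.
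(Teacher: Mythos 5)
Your proof is correct and matches the paper's, which likewise reduces to the \bc{} bound by ``replacing every set bit with the integer $\bsrange$'' in Lemma~\ref{lem:approx-counting-1} and Lemma~\ref{lem:approx-counting-2}; your black-box simulation phrasing and the paper's direct rerun of the two lemmas on $\set{0,\bsrange}$-valued inputs are the same argument. Your version is in fact spelled out more carefully than the paper's one-line proof.
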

\begin{proof}
The proof of the lemma is very similar to the proof of Theorem~\ref{thm:BasicCountingLB} and is obtained by replacing every set bit with the integer $\bsrange$ in Lemma~\ref{lem:approx-counting-1} and Lemma~\ref{lem:approx-counting-2}.
\end{proof}

Next, we show a lower bound for smaller values of $\eps$.

\begin{lemma}\label{lem:summing-lb-2}
For any $\eps$, approximating \bs{} to within an additive error of $\bserror$ requires at least $\window \log \floor{\frac{1}{4\weps} + 1}$ memory bits.
\end{lemma}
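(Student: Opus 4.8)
\emph{Proof plan.} The plan is to exhibit a large ``fooling set'' of input streams of length exactly $\window$ that must drive any deterministic $\bserror$-additive approximation algorithm into pairwise distinct memory configurations; counting them then gives the bound. This is in the same spirit as Lemma~\ref{lem:approx-counting-1}, but tuned to the small-$\eps$ regime, where a $\window\log(\cdot)$ bound beats the $\Omega(\eps^{-1})$ bound of Lemma~\ref{lem:summing-lb-1}. Concretely, I would fix a coarse ``alphabet'' $\mathcal V\subseteq\frange\bsrange$ consisting of the multiples of a gap parameter $g$ that do not exceed $\bsrange$, i.e.\ $\mathcal V=\set{0,g,2g,\ldots,(D-1)g}$ with $D=\floor{\bsrange/g}+1$, and consider the $D^{\window}$ streams in $\mathcal V^{\window}$. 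The parameter $g$ is chosen of order $4\bserror$; the two competing constraints on it (worked out below) are $g\ge 2\bserror$, needed for the distinguishing argument, and $g\le 4\bserror$, needed so that $D$ is as large as claimed.

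The heart of the proof is the distinguishing step. Suppose, for contradiction, that two distinct streams $s^1,s^2\in\mathcal V^{\window}$ lead to the same configuration after the algorithm reads them. Let $t$ be the \emph{last} coordinate on which they differ, and feed $t-1$ additional zeros to both runs. Since the algorithm was in a common configuration after the first $\window$ symbols and we then appended identical suffixes, it is still in a common configuration, hence it answers a query with one and the same value $a$. On the other hand, the current window now holds $(s^1_t,\ldots,s^1_{\window},0^{t-1})$ in the first run and $(s^2_t,\ldots,s^2_{\window},0^{t-1})$ in the second; because $s^1_j=s^2_j$ for every $j>t$, the two window sums differ by exactly $\abs{s^1_t-s^2_t}$, a nonzero multiple of $g$, hence at least $g\ge 2\bserror$. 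But the $\bserror$-approximation guarantee forces $a$ to lie within $\bserror$ of each of the two window sums, so these sums differ by strictly less than $2\bserror$ --- a contradiction. Therefore all $D^{\window}$ streams reach distinct configurations, and the algorithm uses at least $\logp{D^{\window}}=\window\log D$ bits.

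It remains to choose $g$ and verify $D\ge\floor{\frac{1}{4\weps}+1}$. Taking $g=\floor{4\bserror}$ works whenever $\bserror\ge\frac14$: a short case check gives $\floor{4\bserror}\ge 2\bserror$ (so the distinguishing step goes through) and trivially $\floor{4\bserror}\le 4\bserror$, whence $D=\floor{\bsrange/g}+1\ge\floor{\bsrange/(4\bserror)}+1=\floor{\frac{1}{4\weps}}+1=\floor{\frac{1}{4\weps}+1}$. When $\bserror<\frac14$ the sum must be reported exactly, and there the operative lower bound is the $\window\logp{\bsrange+1}$ ``store everything'' bound (obtained from the very same argument with $g=1$), which dominates the claimed quantity in that range; so I would either restrict to $\weps\ge\frac{1}{4\bsrange}$ or simply invoke the exact-storage bound, rather than fight that edge case. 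I do not expect a genuine obstacle here: once the fooling-set construction is in place, the only delicate point is picking $g$ so that this single chain of floor inequalities closes simultaneously with the two-sided constraint $2\bserror\le g\le 4\bserror$.
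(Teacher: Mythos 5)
Your proposal is correct and follows essentially the same route as the paper's proof: a fooling set of $\window$-length strings over an arithmetic progression of integers whose gap is a constant multiple of $\bserror$ (the paper takes the gap $\floor{2\bserror+1}$ rather than your $\floor{4\bserror}$), with the distinguishing step of appending zeros up to the last differing position and invoking the additive guarantee. The small-$\bserror$ edge case you flag is handled no more rigorously in the paper (which likewise just appeals to ``exact summing'' there), so your treatment matches the intended argument.
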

\begin{proof}
Denote $x\triangleq\floor{2\bserror + 1}$ and $C \triangleq \set{n\cdot x\mid n\in\frange{\floor{\frac{1}{2\weps + \bsrange^{-1}}}}}$.
Let $L$ be the language of all $\window$ length strings over the number in $C$, i.e.,
$$L_{R,W,\eps} = \set{\sigma_0\sigma_1\cdots\sigma_{\window-1} | \forall j\in[\window]: \sigma_j\in C}.$$
We show that every two distinct sequences in $L$ must lead the algorithm into distinct configurations implying a lower bound of
$$\ceil {\log |L|} \ge \window\log |C| = \window \log \floor{\frac{1}{2\weps + \bsrange^{-1}}+1}\ge \window \log \floor{\frac{1}{4\weps}+1}$$
bits, where the last inequality follows from the fact that any $\eps < \frac{1}{2\bsrange\window}$ implies exact summing.
Assume, by way of contradiction, that two different words
$$s^1=\sigma_0^1\sigma_1^1\cdots\sigma_{\window-1}^1, s^2=\sigma_0^2\sigma_1^2\cdots\sigma_{\window-1}^2\in L$$
lead the algorithm to the same configuration.
Denote the index of the last letter that differs between $s^1$ and $s^2$
by $t \triangleq max\{\tau\mid \sigma_\tau^1\neq \sigma_\tau^2\}.$
Next, consider the sequences
$s^1\cdot 0^{t-1}$ and
$s^2\cdot 0^{t-1}.$
The algorithm must reach the same configuration after processing these sequences, even though the sum of the last $\window$ elements differ by at least $x=\floor{2\bserror + 1}>2\bserror$.
Therefore, the algorithm's error must be greater than $\bserror$ at least for one of the sequences, in contradiction to the assumption.
\end{proof}

\begin{theorem}\label{thm:summing-lower-bound}
Approximating \bs{} to within an additive error of $\bserror$ requires $\Omega(\frac{1}{\eps}+\logw)$ bits  for $\frac{1}{2\window}\le\eps\le\frac{1}{4}$ and $\Omega(\window\logp{\frac{1}{\weps}})$ bits for $\eps \le \frac{1}{2\window}$.
\end{theorem}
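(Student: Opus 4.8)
The plan is to derive Theorem~\ref{thm:summing-lower-bound} as an immediate corollary of the two lemmas just proved, exactly in the spirit of Theorem~\ref{thm:BasicCountingLB}. For the regime $\frac{1}{2W}\le\eps\le\frac14$, I would invoke Lemma~\ref{lem:summing-lb-1}: it gives a lower bound of $\floor{\lowerbound}=\floor{\max\set{\log W,\frac{1}{2\eps+W^{-1}}}}$ bits. Here I simply need to observe that $\frac{1}{2\eps+W^{-1}}=\Omega(\frac{1}{\eps})$ whenever $\eps\ge\frac{1}{2W}$ (since then $W^{-1}\le 2\eps$, so the denominator is at most $4\eps$), and $\log W$ is trivially $\Omega(\log W)$; taking the max of the two and observing $\max\{a,b\}\ge\frac12(a+b)$ yields $\Omega(\frac{1}{\eps}+\log W)$.

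For the regime $\eps\le\frac{1}{2W}$, I would invoke Lemma~\ref{lem:summing-lb-2}, which gives $W\log\floor{\frac{1}{4W\eps}+1}$ bits. The goal is to show this is $\Omega\parentheses{W\logp{\frac{1}{W\eps}}}$. When $\eps\le\frac{1}{2W}$ we have $\frac{1}{W\eps}\ge 2$, so $\frac{1}{4W\eps}+1\ge\frac{1}{2}\cdot\frac{1}{2W\eps}\ge \frac{1}{2W\eps}\cdot\frac12$; more carefully, $\floor{\frac{1}{4W\eps}+1}\ge\frac{1}{4W\eps}$ and hence $\log\floor{\frac{1}{4W\eps}+1}\ge\logp{\frac{1}{W\eps}}-2=\Omega\parentheses{\logp{\frac{1}{W\eps}}}$ as long as $\frac{1}{W\eps}$ is bounded away from $1$ — which holds since $\frac{1}{W\eps}\ge 2$. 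Multiplying by $W$ gives the claim. (The edge case $\eps < \frac{1}{2RW}$, where exact summing is forced, is already absorbed into the statement of Lemma~\ref{lem:summing-lb-2}, so no separate treatment is needed.)

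The proof is therefore essentially a two-line bookkeeping argument: \emph{Immediate from Lemmas~\ref{lem:summing-lb-1} and~\ref{lem:summing-lb-2}}, together with the elementary estimates above converting the explicit floor/max expressions into $\Omega(\cdot)$ form. There is no genuine obstacle; the only mild subtlety is confirming that the constants line up in the $\eps\le\frac{1}{2W}$ case — specifically that $\frac{1}{4W\eps}+1$ stays comfortably above a constant strictly exceeding $1$ so that its logarithm is a constant-factor approximation of $\logp{\frac{1}{W\eps}}$ rather than vanishing. Since $\eps\le\frac{1}{2W}$ guarantees $\frac{1}{W\eps}\ge 2$, this is fine, and I would state it with a one-sentence justification rather than a detailed computation.

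\begin{proof}
Immediate from Lemmas~\ref{lem:summing-lb-1} and~\ref{lem:summing-lb-2}. For $\frac{1}{2\window}\le\eps\le\frac14$, Lemma~\ref{lem:summing-lb-1} gives $\floor{\lowerbound}$ bits; since $\eps\ge\frac{1}{2\window}$ implies $\window^{-1}\le 2\eps$, the denominator $2\eps+\window^{-1}\le 4\eps$, so $\frac{1}{2\eps+\window^{-1}}\ge\frac{1}{4\eps}$, and using $\max\set{a,b}\ge\frac{a+b}{2}$ we obtain $\Omega(\frac{1}{\eps}+\logw)$. For $\eps\le\frac{1}{2\window}$ we have $\frac{1}{\weps}\ge 2$, hence $\floor{\frac{1}{4\weps}+1}\ge\frac{1}{4\weps}$, so $\log\floor{\frac{1}{4\weps}+1}\ge\logp{\frac{1}{\weps}}-2=\Omega\parentheses{\logp{\frac{1}{\weps}}}$, and multiplying the bound of Lemma~\ref{lem:summing-lb-2} by $\window$ yields $\Omega\parentheses{\window\logp{\frac{1}{\weps}}}$ bits.
\end{proof}
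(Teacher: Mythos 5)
Your overall plan---reading the theorem off Lemmas~\ref{lem:summing-lb-1} and~\ref{lem:summing-lb-2}---is the paper's plan, and your treatment of the regime $\frac{1}{2\window}\le\eps\le\frac14$ is correct. But there is a genuine gap in the regime $\eps\le\frac{1}{2\window}$, precisely at the point you wave away. When $\eps\le\frac{1}{2\window}$ but $\eps=\Theta(\window^{-1})$, Lemma~\ref{lem:summing-lb-2} degenerates: at $\eps=\frac{1}{2\window}$, for example, $\frac{1}{4\weps}=\frac12$, so $\floor{\frac{1}{4\weps}+1}=1$ and the lemma yields a lower bound of $\window\log 1=0$, whereas the theorem claims $\Omega(\window\logp{\frac{1}{\weps}})=\Omega(\window)$ there. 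Your chain of inequalities hides this: the step ``$\logp{\frac{1}{\weps}}-2=\Omega\parentheses{\logp{\frac{1}{\weps}}}$'' requires $\logp{\frac{1}{\weps}}$ to be bounded away from $2$, not merely that $\frac{1}{\weps}\ge 2$; with only $\frac{1}{\weps}\ge 2$ the left-hand side can equal $-1$. Your argument is sound only once $\frac{1}{\weps}$ exceeds a constant strictly larger than $4$, and your parenthetical about $\eps<\frac{1}{2\bsrange\window}$ addresses a different, irrelevant boundary.

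The paper closes exactly this hole with a third case: for $\eps<\frac{1}{2\window}$ with $\eps=\Theta(\window^{-1})$, an $\bserror$-additive approximation has error below $\bsrange/2$, hence is at least as hard as the $\eps'=\frac{1}{2\window}$ instance, so the $\Omega(\frac{1}{\eps'}+\logw)=\Omega(\window)$ bound from Lemma~\ref{lem:summing-lb-1} carries over; and $\Omega(\window)$ coincides with $\Omega(\window\logp{\frac{1}{\weps}})$ in that regime because the logarithmic factor is $\Theta(1)$. Lemma~\ref{lem:summing-lb-2} is then invoked only for $\eps=o(\window^{-1})$, where the floor and the additive $-2$ really are absorbed into the $\Omega$. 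To repair your write-up, add this intermediate reduction (or explicitly split $\eps\le\frac{1}{2\window}$ into $\eps=\Theta(\window^{-1})$ and $\eps=o(\window^{-1})$ and handle the former via Lemma~\ref{lem:summing-lb-1}).
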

\begin{proof}
Lemma~\ref{lem:summing-lb-1} shows that approximating \bs{} within $\bserror$ requires $$\lowerbound$$ bits for $\frac{1}{2\window}\le\eps\le\frac{1}{4}.$
The same argument used in Lemma~\ref{lemma:algorithm-ratio} shows that this implies $\Omega\parentheses{\frac{1}{\eps}+\logw}$ bits lower bound for any $\eps\ge\frac{1}{2\window}$.
For $\epsilon<\frac{1}{2\window}$ such that $\epsilon=\Theta(W^{-1})$, approximating \bs{} within $\bserror$ implies a $\frac{\bsrange}{2}$-additive approximation and therefore the $\Omega\parentheses{\frac{1}{\eps}+\logw}$ bound holds.
For $\eps = o\parentheses{\frac{1}{\window}}$, we use Lemma~\ref{lem:summing-lb-2}, which implies a lower bound of $\window \log \floor{\frac{1}{4\weps} + 1}= \Omega\parentheses{\smallEpsLowerbound}$~memory~bits.
\end{proof}
An immediate corollary of Theorem~\ref{thm:summing-lower-bound} is that any exact algorithm for \bs{} requires at least $\Omega(W\log\bsrange)$ bits, i.e., the naive solution of maintaining a $\window$-sized array of the elements in the window, encoding each using $\ceil{\logp{\bsrange+1}}$ bits, is optimal (for exact \bs).
Finally, we extend the results to randomized algorithms, where the proof appears in Appendix~\ref{apx:bsRandomizedLB}.
\begin{theorem}\label{thm:bsRandomizedLB}
For any fixed $\delta\in[0,1/2)$, any randomized Monte Carlo algorithm that gives a $W\epsilon$ approximation to \bs{} with a probability of at least $1-\delta$ requires $\Omega(\frac{1}{\eps}+\logw)$ bits  for $\frac{1}{2\window}\le\eps\le\frac{1}{4}$ and $\Omega(\window\logp{\frac{1}{\weps}})$ bits for $\eps \le \frac{1}{2\window}$.
	Notice that the $\delta=0$ case applies to Las Vegas algorithms.
\end{theorem}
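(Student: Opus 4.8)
The plan is to mimic the randomized lower-bound argument of Theorem~\ref{thm:RandomizedBasicCountingLB}, combining it with the deterministic bounds of Lemmas~\ref{lem:summing-lb-1} and~\ref{lem:summing-lb-2} via the Yao Minimax principle. First I would recall the two Minimax analogues already cited in the excerpt: for Las Vegas algorithms, the expected memory of the best randomized algorithm on the worst input is at least the memory of the best deterministic algorithm on a random input drawn from any fixed distribution $\mathbf p$; for Monte Carlo algorithms that are \ecor{} with probability $\ge 1-\delta$, the expected memory is at least half the memory of the best deterministic algorithm that errs with probability at most $2\delta$ on a random $\mathbf p$-instance. The $\delta=0$ case of the Monte Carlo statement degenerates to the Las Vegas bound, which is why proving the Monte Carlo version suffices.

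Next I would set up the hard distributions by padding the deterministic hard languages with trailing zeros, exactly as in the \bc{} case. For the regime $\frac{1}{2W}\le\eps\le\frac14$, let $\mathbf p$ be uniform over $L_{MC}=L_{R,W,\eps}'\cdot\set{0^W}$, where the prefix language is the one obtained from Lemma~\ref{lem:summing-lb-1} (i.e.\ replacing each set bit by the integer $\bsrange$ in $L_{W,\eps}$), and analogously a second distribution exercising the $\flogw$ bound. For the regime $\eps\le\frac{1}{2W}$, let $\mathbf p$ be uniform over $L_{R,W,\eps}\cdot\set{0^W}$, with $L_{R,W,\eps}$ the language of Lemma~\ref{lem:summing-lb-2} (length-$W$ strings over the arithmetic progression $C$). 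In each case the trailing $0^W$ forces the algorithm, after reading the first $W$ symbols, into a state that determines its future answers; the deterministic arguments of Lemmas~\ref{lem:summing-lb-1} and~\ref{lem:summing-lb-2} show two distinct prefixes must yield distinct configurations.

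The key quantitative step is then to count how many configurations a deterministic algorithm that is \ecor{} on a $(1-2\delta)$-fraction of a uniform distribution must occupy: since the distribution is uniform, being \ecor{} with probability $\ge 1-2\delta$ means being correct on at least $(1-2\delta)\abs{L_{MC}}$ of the inputs, and the distinctness argument forces that many distinct configurations, hence at least $\logp{(1-2\delta)\abs{L_{MC}}}$ bits. Applying the Minimax principle gives a Monte Carlo lower bound of $\frac12\logp{(1-2\delta)\abs{L_{MC}}}\ge \frac12\log\abs{L_{MC}}+\frac12\logp{1-2\delta}$; since $\delta$ is a fixed constant in $[0,1/2)$, the additive $\frac12\logp{1-2\delta}$ term is an $O(1)$ loss, so the bound is $\Omega(\log\abs{L_{MC}})$, which equals $\Omega(\frac1\eps+\logw)$ in the first regime and $\Omega(W\logp{\frac{1}{\weps}})$ in the second, matching the deterministic orders up to the constant $\frac12$. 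I expect the main obstacle to be purely expository rather than mathematical: stating cleanly that the two separate deterministic ingredients ($\frac{1}{2\eps+W^{-1}}$ and $\flogw$) can be combined under a single randomized distribution, and handling the boundary case $\eps=\Theta(W^{-1})$ with $\eps<\frac{1}{2W}$, where — as in the proof of Theorem~\ref{thm:summing-lower-bound} — one observes that a $\bserror$-additive approximation is in particular a $\frac\bsrange2$-additive approximation, so the $\Omega(\frac1\eps+\logw)$ bound of the first regime still applies and no separate argument is needed.
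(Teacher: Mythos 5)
Your proposal is correct and follows essentially the same route as the paper's own proof: pad the hard languages of Lemmas~\ref{lem:summing-lb-1} and~\ref{lem:summing-lb-2} with $0^W$, take the uniform distribution over the padded language, and apply the Monte Carlo analogue of Yao's Minimax principle to get $\frac{1}{2}\logp{(1-2\delta)\cdot|L|}$ bits, with $\delta=0$ covering Las Vegas. Your explicit treatment of the boundary regime $\eps=\Theta(W^{-1})$ with $\eps<\frac{1}{2W}$ is a small point the paper's appendix proof leaves implicit, but it does not change the argument.
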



\subsection{Upper Bound}
We show that our \bc{} algorithm can be adapted to this problem with only a small memory overhead such that the algorithm's state size remains independent of $\bsrange$. We first present the extension of the algorithm for the  \largeEps{} case.
In Section~\ref{sec:small-error} we complete the picture by giving an alternative algorithm for smaller values of $\eps$.
Intuitively, we ``scale'' the algorithm by dividing each added element by $\bsrange$ and rounding the result.
In order to keep the sum of elements not yet accounted for in $\bitarray$, $\remainder$ is now maintained as a fixed-point variable rather than an integer.
Ideally, the fractional value of the remainder $\remainder$ should allow exact representation of $\set{0,1/\bsrange,\ldots,1-1/\bsrange}$, and therefore requires $\log\bsrange$ bits.
When the range $\bsrange$ is ``large'', or more precisely $\bsrange=\omega(\epsilon^{-1})$, we save space by storing the fractional value of $\remainder$ using less than $\log\bsrange$ bits, which inflicts a rounding error.
That is, we keep $\remainder$ using 
$\ceil{\logp{2\blockSize}}+\bsReminderFractionBits$ bits.
Similarly to our \bc{} algorithm, $\ceil{\logp{2\blockSize}}$ bits are used to store the integral part of $\remainder$.
The additional $\bsReminderFractionBits$ bits are used for the fractional value of $\remainder$. The value of $\bsReminderFractionBits$ is determined later.

In order to keep the total error bounded, we compensate for the rounding error by using smaller block sizes, which are derived from the number of blocks $\numBlocks$, determined in \eqref{eq:numBlocks}.
Our algorithm keeps the following variables:
\begin{description}
	
\footnotesize
\item [$\bitarray$ -] a bit-array of size $\numBlocks$.
\item [$\remainder$ -] a counter for the sum of elements which is not yet accounted for in $\bitarray$.
\item [$\currentBlockIndex$ -] the index of the ``oldest'' block in $\bitarray$.
\item [$\sumOfBits$ -] the sum of all bits in $\bitarray$.
\item [$\blockOffset$ -] a counter for the current offset within the block.
\normalsize
\end{description}
Our \bs{} algorithm is presented in Algorithm~\ref{alg:basic-summing}.
We use $\text{Round}_{\bsReminderFractionBits}(z)$ for some $z\in[0,1]$ to denote rounding of $z$ to the nearest value $\widetilde{z}$ such that $2^{\bsReminderFractionBits}\widetilde{z}$ is an integer.

\begin{algorithm}[h]
\caption{Additive Approximation for Basic-Summing}\label{alg:basic-summing}
\begin{algorithmic}[1]
\State Initialization: $\remainder = 0, \bitarray = 0, \sumOfBits = 0, \currentBlockIndex=0, \blockOffset=0$.
\Function{\add[\text{element }\inputVariable]}{}
\State $\bsFracInput = \text{Round}_{\bsReminderFractionBits}(\frac{\inputVariable}{R})$ \label{line:rounding}
\If {$\blockOffset = {\blockSize} - 1$}\label{line:end-of-block}
	\State $\sumOfBits = \sumOfBits - \currentBlock$
	\If {$\remainder+\bsFracInput\geq {\blockSize}$}
		\State $\currentBlock = 1$
		\State $\remainder = \remainder - {\blockSize} + \bsFracInput$
	\Else
		\State $\currentBlock = 0$
		\State $\remainder = \remainder + \bsFracInput$
	\EndIf
	\State $\sumOfBits = \sumOfBits + \currentBlock$
	\State $\blockOffset = 0$
	\State \inc \currentBlockIndex $ \mod \numBlocks$	
\Else
	\State $\remainder = \remainder + \bsFracInput$
	\State \inc \blockOffset
\EndIf
\EndFunction

\Function{\query}{}
\State \Return {$\bsrange \cdotpa{{\blockSize} \cdot \sumOfBits + \remainder  - \halfBlock - \blockOffset \cdot \currentBlock}$}
\EndFunction

\end{algorithmic}
\end{algorithm}
\normalsize

\begin{theorem} \label{thm:approximation-ratio}
\largeEpsRestriction{} Algorithm~\ref{alg:basic-summing} provides an $\bserror$-additive approximation for \bs.
\end{theorem}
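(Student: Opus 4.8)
The plan is to reduce the claim to the already-proven Theorem~\ref{thm:approximation-ratio-bc} by observing that, up to the final multiplication by $\bsrange$ in the query, Algorithm~\ref{alg:basic-summing} is exactly Algorithm~\ref{alg:BasicCounting} (with block size $\blockSize$) run on the \emph{rounded, scaled} stream, and then paying separately for the rounding performed on line~\ref{line:rounding}. Write $x'_j \triangleq \text{Round}_{\bsReminderFractionBits}(x_j/\bsrange)$ for the quantity the $j$-th element contributes to $\remainder$. Two elementary facts carry the argument: (i) $x'_j\in[0,1]$ and $x'_j$ is an integer multiple of $2^{-\bsReminderFractionBits}$, so (since $\blockSize$ is an integer) every intermediate value of $\remainder$ remains a multiple of $2^{-\bsReminderFractionBits}$ and the invariant $\remainder\le\blockSize$ at block boundaries carries over verbatim; and (ii) $|x'_j-x_j/\bsrange|\le 2^{-\bsReminderFractionBits-1}$, since $\text{Round}_{\bsReminderFractionBits}$ rounds to the nearest multiple of $2^{-\bsReminderFractionBits}$.

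First I would note that the proof of Theorem~\ref{thm:approximation-ratio-bc} never uses that the inputs are binary: it only uses that each input lies in $[0,1]$ (for the inequalities $\sum_{j=m+1}^{\blockSize}x_j\le\blockSize-m$ and $\sum_{j=1}^{m}x_j\le m$) and that $\remainder$ is in $[0,\blockSize]$ at each block end (shown by induction, which still works since $\remainder$ grows by at most $1$ per element). Hence, defining the scaled target $C'\triangleq\sum_{j=m+1}^{\window+m}x'_j$, the identical telescoping computation gives $\left|\tfrac{1}{\bsrange}\bsest-C'\right|\le\halfBlock$. Next I would control the rounding error: by fact (ii) and the triangle inequality, $\left|\bsrange\,C'-\bssum\right|=\left|\sum_{j=m+1}^{\window+m}(\bsrange x'_j-x_j)\right|\le\sum_{j=m+1}^{\window+m}\bsrange\,|x'_j-x_j/\bsrange|\le \bsrange\window\,2^{-\bsReminderFractionBits-1}$. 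Combining the two bounds via the triangle inequality yields $\left|\bsest-\bssum\right|\le\bsrange\left(\halfBlock+\window\,2^{-\bsReminderFractionBits-1}\right)$, and I would finish by invoking the prescribed number of blocks $\numBlocks$ and precision $\bsReminderFractionBits$, which are chosen precisely so that $\tfrac{1}{2\numBlocks}+2^{-\bsReminderFractionBits-1}\le\eps$, giving the desired $\bserror$ bound.

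The main obstacle is not the arithmetic above but the \emph{feasibility} of that last choice: one must exhibit a positive integer $\numBlocks$ with $\blockSize$ integral and a non-negative integer $\bsReminderFractionBits$ satisfying $\tfrac{1}{2\numBlocks}+2^{-\bsReminderFractionBits-1}\le\eps$, while keeping $\numBlocks=O(\eps^{-1})$ and $\bsReminderFractionBits=O(\logw)$ so that the state remains independent of $\bsrange$ and totals $O(\eps^{-1}+\logw)$ bits. This is exactly where the hypothesis $\eps^{-1}\le 2\window(1-\tfrac{1}{\logw})$ is used: pushing $\eps^{-1}$ below $2\window$ by the factor $1-\tfrac{1}{\logw}$ leaves enough room in the error budget to split between the block-quantization term $\tfrac{1}{2\numBlocks}$ (which forces $\numBlocks>\tfrac{1}{2\eps}$) and a rounding term small enough to need only $O(\logw)$ fractional bits. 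Verifying this split, and re-checking that the $\remainder\le\blockSize$ invariant survives $x'_j$ reaching the value $1$, is the only genuinely delicate bookkeeping; everything else is a transcription of the \bc{} analysis with an added $\pm\,\bsrange\window\,2^{-\bsReminderFractionBits-1}$ slack.
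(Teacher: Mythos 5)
Your proposal is correct and follows essentially the same route as the paper's proof in Appendix~\ref{sec:proof-basic-summing}: the paper likewise decomposes the error into the block-quantization term $\bsrange\cdot\halfBlock$ (obtained by rerunning the \bc{} case analysis on the rounded, scaled stream) plus a rounding term bounded by $2^{-1-\bsReminderFractionBits}\bsrange\window$, and then chooses $\numBlocks=\ceil{\frac{1}{2\eps-2^{-\bsReminderFractionBits}}}$ and $\bsReminderFractionBits=\ceil{\logp{\eps^{-1}\logw}}$ so that $\frac{1}{2\numBlocks}+2^{-1-\bsReminderFractionBits}\le\eps$, using the hypothesis \largeEps{} exactly to guarantee $\numBlocks\le\window$. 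The only difference is presentational: you package the first part as a black-box invocation of the \bc{} analysis generalized to $[0,1]$-valued inputs, while the paper carries the rounding error $\bsRoundingError$ through the same telescoping computation inline.
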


Theorem \ref{thm:approximation-ratio} shows that for any \largeEps{}, by choosing $\bsReminderFractionBits\triangleq \ceil{\logp{\eps^{-1}\log W}}$ and the number of blocks to be $\numBlocks=\ceil{ \frac{1}{2\eps -2^{-\bsReminderFractionBits}}}$, our algorithm estimates $S$ with an additive error of $\bserror$.
The proof of Theorem~\ref{thm:approximation-ratio} can be found in Appendix~\ref{sec:proof-basic-summing}.
The following theorem analyzes the memory requirements of our algorithm.
\begin{theorem}\label{thm:bs-memory-requirements}
\largeEpsRestriction{} Algorithm~\ref{alg:basic-summing} requires $\parentheses{2\logw + \frac{1}{2\epsilon}}\smallMultError{}$ memory bits.
\end{theorem}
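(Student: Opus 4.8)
The plan is to account for the memory used by each of the five variables $\bitarray$, $\remainder$, $\currentBlockIndex$, $\sumOfBits$, and $\blockOffset$ under the parameter choices announced just before Theorem~\ref{thm:bs-memory-requirements}, namely $\bsReminderFractionBits = \ceil{\logp{\eps^{-1}\logw}}$ and $\numBlocks = \ceil{\frac{1}{2\eps - 2^{-\bsReminderFractionBits}}}$. First I would pin down the two dominant terms: the bit-array $\bitarray$ contributes exactly $\numBlocks$ bits, and the fixed-point remainder $\remainder$ contributes $\ceil{\logp{2\blockSize}} + \bsReminderFractionBits$ bits (integral part plus fractional part). The remaining three variables, $\currentBlockIndex$, $\sumOfBits$, and $\blockOffset$, are all bounded by $O(\numBlocks)$ or $O(\blockSize)$, so each costs $O(\logw)$ bits; these fold into the lower-order terms.

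The crux is showing $\numBlocks = \frac{1}{2\eps}(1+o(1))$ and that $\bsReminderFractionBits + \ceil{\logp{2\blockSize}} = O(\logw)$, so that their sum times the $\smallMultError$ factor absorbs them. For $\numBlocks$, since $2^{-\bsReminderFractionBits} \le \frac{\eps}{\logw}$ by the choice of $\bsReminderFractionBits$, we get $2\eps - 2^{-\bsReminderFractionBits} \ge 2\eps(1 - \frac{1}{2\logw}) = 2\eps(1-o(1))$, hence $\numBlocks \le \frac{1}{2\eps}(1+o(1)) + 1$. For the remainder's size: $\bsReminderFractionBits = \ceil{\logp{\eps^{-1}\logw}} = \logp{\eps^{-1}} + O(\log\logw)$, and using the restriction \largeEps{} (which gives $\eps^{-1} \le 2W$, so $\logp{\eps^{-1}} \le \logw + 1$), this is $O(\logw)$. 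The integral part $\ceil{\logp{2\blockSize}} = \ceil{\logp{\frac{2W}{\numBlocks}}} \le \ceil{\logp{2W}} = O(\logw)$ since $\numBlocks \ge 1$. Then $\currentBlockIndex$ needs $\ceil{\log\numBlocks} = O(\logw)$ bits (again using $\numBlocks = O(W)$), $\sumOfBits$ needs $\ceil{\logp{\numBlocks+1}} = O(\logw)$ bits, and $\blockOffset$ needs $\ceil{\logp{\blockSize}} = O(\logw)$ bits.

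Summing up, the total is $\numBlocks + \ceil{\logp{2\blockSize}} + \bsReminderFractionBits + \ceil{\log\numBlocks} + \ceil{\logp{\numBlocks+1}} + \ceil{\logp{\blockSize}} = \frac{1}{2\eps}(1+o(1)) + O(\logw)$. To match the claimed form $\parentheses{2\logw + \frac{1}{2\epsilon}}\smallMultError$ precisely, I would group the $O(\logw)$ slack against the $2\logw$ term, observing that $\ceil{\logp{2\blockSize}} + \ceil{\logp{\blockSize}} \le 2\logw + O(1)$ and that the low-order contributions from $\bsReminderFractionBits$, $\currentBlockIndex$, $\sumOfBits$, and the ceiling roundings together contribute an additive $O(\log\logw)$, which is $o(\logw)$ and thus disappears into the $\smallMultError$ factor multiplying $2\logw$.

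The main obstacle I anticipate is bookkeeping precision: unlike the clean \bc{} case (Theorem~\ref{thm:counting-memory}), here the integral part of $\remainder$ is $\ceil{\logp{2\blockSize}}$ rather than a fixed expression in $\weps$, and $\blockSize = W/\numBlocks$ depends on $\numBlocks$, which itself depends on $\bsReminderFractionBits$; so I must be careful that no circular dependency inflates the bound and that the $O(\log\logw)$ terms genuinely stay below the $o(\logw)$ threshold. I would also double-check that the restriction \largeEps{} is exactly what is needed to keep $\bsReminderFractionBits$ (hence $\logp{\eps^{-1}}$) at $O(\logw)$ — for $\eps^{-1}$ super-polynomial in $W$ this would fail, which is precisely why the small-$\eps$ regime needs the separate algorithm of Section~\ref{sec:small-error}.
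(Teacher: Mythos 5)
Your variable-by-variable accounting, the choice of $\bsReminderFractionBits$ and $\numBlocks$, and the bound $\numBlocks\le 1+\frac{1}{2\eps}\smallMultError$ all match the paper's proof. The gap is in your final absorption step. You assert that $\bsReminderFractionBits$, $\ceil{\log \numBlocks}$, $\ceil{\logp{\numBlocks+1}}$ and the ceiling roundings together contribute an additive $O(\log\logw)$. That is false: $\bsReminderFractionBits=\ceil{\logp{\eps^{-1}\logw}}=\log\eps^{-1}+\log\logw+O(1)$ and $\ceil{\log \numBlocks}+\ceil{\logp{\numBlocks+1}}=2\log\eps^{-1}+O(1)$, so these terms total $\Theta(\log\eps^{-1})$, which under the restriction \largeEps{} can be as large as $\Theta(\logw)$ (for instance $\eps^{-1}=\sqrt{W}$ already gives an extra $\tfrac{3}{2}\logw$, and $\eps^{-1}=\Theta(W)$ gives an extra $3\logw$). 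A surplus of $\Theta(\logw)$ cannot disappear into the $\smallMultError$ factor multiplying $2\logw$, so as written the argument only yields $\parentheses{c\logw+\frac{1}{2\eps}}\smallMultError$ for some $c>2$ on part of the parameter range.

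The theorem still holds, and the paper's proof closes exactly this hole with two devices you discarded. First, it does not give away the savings in the integral parts: $\remainder$'s integral part and $\blockOffset$ each need only $\logp{\weps}+O(1)$ bits (not $\logw$), and the resulting $-2\log\eps^{-1}$ cancels the $+2\log\numBlocks$ cost of $\currentBlockIndex$ and $\sumOfBits$, leaving a residual of order $\log\eps^{-1}$ rather than $3\log\eps^{-1}$. Second — and this is the step you cannot avoid — the residual $\Theta(\log\eps^{-1})$ term is handled by a case split: if $\frac{1}{2\eps}\le\logw-1$ then $\log\eps^{-1}=O(\log\logw)=o(\logw)$ and is absorbed into the $2\logw$ term, while otherwise $\log\eps^{-1}=O(\logw)=o\parentheses{\frac{1}{2\eps}}$ and must be absorbed into the $\frac{1}{2\eps}$ term instead. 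Your instinct that ``bookkeeping precision'' is the main obstacle was right; the missing idea is that the slack is absorbed by \emph{different} summands of $2\logw+\frac{1}{2\eps}$ depending on the size of $\eps^{-1}$, not uniformly by the $2\logw$ term.
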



The proof is similar to the proof of Theorem~\ref{thm:counting-memory} and is therefore deferred to Appendix~\ref{app:bs-memory-requirements-proof}.

\subsection{Summing with Small Error}\label{sec:small-error}
Algorithm~\ref{alg:basic-summing} only works for \largeEps{} that satisfies $\frac{W}{k} \ge 1$; otherwise, $\numBlocks$ cannot represent the number of blocks, as blocks cannot be empty.
To complete the picture, we present Algorithm~\ref{alg:basic-summing-small-error} that works for smaller errors.
Intuitively, we keep an array $\bitarray$ of size $\window$, such that every cell represents the number of integer multiples of $\frac{\bsrange\window}{\numBlocks}$ in an arriving item.
Similarly to the above algorithms, we reduce the error by tracking the remainder in a variable $\remainder$, propagating uncounted fractions to the following item. In this case as well, the optimistic approach reduces the error compared with keeping a $\window$-sized array of rounded values for approximating the sum.
Each cell in $b$ needs to represent a value in $\frange{\floor{1+\iblockSize}}$; the remainder $\remainder$ is now a fractional number, represented using $\bsReminderFractionBits$ bits.
When a new item is added, we scale it, add the result to $\remainder$, and update both $\currentBlock$ and the remainder.


\begin{algorithm}[t]
\caption{Additive Approximation for Basic-Summing with Small Error}\label{alg:basic-summing-small-error}
\begin{algorithmic}[1]
\State Initialization: $\remainder = 0, \bitarray = 0, \sumOfBits = 0, \currentBlockIndex=0$.
\Function{\add[\text{element }\inputVariable]}{}
\State $\bsFracInput = \text{Round}_{\bsReminderFractionBits}(\frac{\inputVariable}{R})$
\State $\sumOfBits = \sumOfBits - \currentBlock$
\State $\currentBlock = \floor{\frac{\remainder+\bsFracInput}{W/k}}$
\State $\remainder = \remainder + \bsFracInput - \currentBlock\cdot{\blockSize}$\label{line:small-error-prop}
\State $\sumOfBits = \sumOfBits + \currentBlock$
\State \inc \currentBlockIndex $ \mod W$	
\EndFunction

\Function{\query}{}
\State \Return {$\bsrange \cdotpa{{\blockSize} \cdot \sumOfBits + \remainder  - \halfBlock}$}
\EndFunction

\end{algorithmic}
\end{algorithm}
\normalsize
\begin{theorem} \label{alg:small-error-correctness}
Algorithm~\ref{alg:basic-summing-small-error} provides an $\bserror$-additive approximation for \bs.
\end{theorem}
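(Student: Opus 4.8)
The plan is to mirror the structure of the correctness proof for Algorithm~\ref{alg:BasicCounting} (Theorem~\ref{thm:approximation-ratio-bc}), tracking the remainder $\remainder$ as the amount of ``scaled mass'' that has arrived but has not yet been absorbed into the array $\bitarray$. First I would set up notation: index the elements so that the current window consists of the last $\window$ arrivals, let $\bsFracInput_j=\text{Round}_{\bsReminderFractionBits}(x_j/\bsrange)$ be the rounded scaled input, and let $\remainder_j$ denote the value of $\remainder$ after processing arrival $j$. Since every arrival triggers exactly one update (there is no block offset $\blockOffset$ here — each cell corresponds to one element), the invariant is cleaner than in the \bc{} case: after line~\ref{line:small-error-prop}, $\remainder_j = \remainder_{j-1}+\bsFracInput_j - \currentBlock\cdot\blockSize$, so summing telescopically over a window of $\window$ consecutive arrivals gives $\blockSize\cdot\sumOfBits + \remainder_{\text{now}} = \remainder_{\text{old}} + \sum_{j\in\text{window}}\bsFracInput_j$, where $\remainder_{\text{old}}$ is the remainder value just before the oldest surviving element entered. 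Multiplying by $\bsrange$, the query return value $\bsrange(\blockSize\cdot\sumOfBits+\remainder-\halfBlock)$ equals $\bsrange\sum_{j}\bsFracInput_j + \bsrange\remainder_{\text{old}} - \bsrange\halfBlock$.

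Next I would decompose the error $\bsest - \bssum$ into three pieces: (i) the rounding error $\bsrange\sum_j\bsFracInput_j - \bssum = \sum_j(\bsrange\bsFracInput_j - x_j)$, which is at most $\window\cdot\bsrange\cdot 2^{-\bsReminderFractionBits-1}$ in absolute value since each term is bounded by $\bsrange 2^{-\bsReminderFractionBits-1}$; (ii) the ``bias correction'' term $-\bsrange\halfBlock$; and (iii) the leftover-remainder term $\bsrange\remainder_{\text{old}}$. The key structural fact to establish by induction is that $0\le\remainder_j < \blockSize$ at all times: each step adds $\bsFracInput_j\le 1\le\blockSize$ (using the \largeEps-analogue hypothesis $\window/\numBlocks\ge 1$ that underlies this regime, or rather the small-error regime where $\numBlocks$ is chosen so blocks have size $\ge 1$) and then subtracts the largest multiple of $\blockSize$ that keeps the result nonnegative, so the invariant is preserved exactly as the floor operation is designed to do. Hence $\remainder_{\text{old}}\in[0,\blockSize)$, and combining (ii)+(iii) gives a contribution in $[-\bsrange\halfBlock,\ \bsrange\halfBlock)$, i.e., absolute value at most $\bsrange\halfBlock = \bsrange\window/(2\numBlocks)$.

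Finally I would choose the parameters $\bsReminderFractionBits$ and $\numBlocks$ so that the total error bound $\window\bsrange 2^{-\bsReminderFractionBits-1} + \bsrange\window/(2\numBlocks)$ is at most $\bserror = \bsrange\window\epsilon$; concretely, balancing the two halves suggests $2^{-\bsReminderFractionBits-1}\le \epsilon/2$ and $1/(2\numBlocks)\le\epsilon/2$, i.e., $\bsReminderFractionBits = \ceil{\logp{1/\epsilon}}$ and $\numBlocks = \ceil{1/\epsilon}$, which also gives the claimed $\Theta(\window\logp{1/\weps})$ space since the array has $\window$ cells each holding a value in $\frange{\floor{1+\numBlocks/\window}}$ plus the $\bsReminderFractionBits$-bit remainder. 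The main obstacle I anticipate is being careful at the two boundaries of the window — precisely specifying what ``$\remainder_{\text{old}}$'' means (the remainder immediately before the first in-window arrival was processed), and verifying that the telescoping is exact despite the cyclic overwriting of $\bitarray$ modulo $\window$, i.e., that $\sumOfBits$ really does equal the sum of the $\window$ cell-values currently in the window and not some stale configuration. This is the same subtlety handled in Figure~\ref{fig:basic-counting} for the \bc{} proof, and I would resolve it the same way, by explicitly writing $\bssum = \bsrange\sum_{j\in\text{window}} (x_j/\bsrange)$ and matching the telescoped sum term by term.
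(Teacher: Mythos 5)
Your proposal is correct and follows essentially the same route as the paper's proof: the telescoping identity $\blockSize\cdot\sumOfBits+\remainder_{W}=\remainder_0+\sum_j\bsFracInput_j$, the invariant $0\le\remainder<\blockSize$ maintained by the floor/remainder update, the $\window\bsrange 2^{-\bsReminderFractionBits-1}$ rounding-error bound, and the $\pm\bsrange\halfBlock$ bias term, combined via $2^{-\bsReminderFractionBits-1}+\frac{1}{2\numBlocks}\le\eps$. Your slightly different parameter choice ($\numBlocks=\ceil{1/\eps}$, $\bsReminderFractionBits=\ceil{\logp{1/\eps}}$ versus the paper's $\numBlocks=\ceil{\frac{1}{2\eps-2^{-\bsReminderFractionBits}}}$) satisfies the same inequality, and your parenthetical claim that $\bsFracInput_j\le 1\le\blockSize$ is false in this regime (here $\blockSize<1$) but harmless, since your actual justification of the invariant via the floor operation is the correct one.
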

The proof is delayed to Appendix~\ref{app:small-errors-approx-proof}.
It considers the rounding error generated by representing $\bsFracInput$ using $\bsReminderFractionBits$ bits, and shows that the remainder propagation (Line~\ref{line:small-error-prop}) limits error accumulation.

\begin{theorem}\label{thm:small-eps-mem-requirements}
\smallEpsRestriction{} Algorithm \ref{alg:basic-summing-small-error} requires $W\logp{{1\over 2W\eps}+1}\cdot\smallMultError\le\frac{1}{2\eps}\cdot\smallMultError$ memory bits.
\end{theorem}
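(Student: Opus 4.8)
The plan is to perform a variable-by-variable bit count for Algorithm~\ref{alg:basic-summing-small-error}, in exactly the spirit of the proofs of Theorem~\ref{thm:counting-memory} and Theorem~\ref{thm:bs-memory-requirements}. First I would recall from the proof of Theorem~\ref{alg:small-error-correctness} the concrete parameter settings that make the algorithm $\bserror$-correct: the number of blocks $\numBlocks$ is chosen so that the half-block bias $\halfBlock$ together with the accumulated rounding error stays within the (scaled) budget $\weps$, which forces $\numBlocks=\Theta(\epsilon^{-1})$ and hence $\iblockSize=\Theta\parentheses{\frac{1}{\weps}}$; and the fractional precision $\bsReminderFractionBits$ is of order $\logp{\epsilon^{-1}}$, up to additive $\log\log\window$-type corrections, so that the per-item rounding errors summed over the window do not exceed the budget. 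These are the quantities to plug into the count.

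Next I would bound the state size term by term. The array $\bitarray$ stores $\window$ cells, each in $\frange{\floor{1+\iblockSize}}$, costing $\window\ceil{\logp{\floor{1+\iblockSize}+1}}$ bits --- this is the dominant term. The remainder $\remainder$ always lies in $[0,\window/\numBlocks)\subseteq[0,1)$, so it needs $\bsReminderFractionBits+O(1)$ bits; the aggregate $\sumOfBits\le\window\floor{1+\iblockSize}$ needs $\ceil{\logp{\window\floor{1+\iblockSize}+1}}=O\parentheses{\logw}+O\parentheses{\logp{\frac{1}{\weps}}}$ bits; and the index $\currentBlockIndex$ needs $\ceil{\logw}$ bits. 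Summing these and substituting $\iblockSize=\frac{1}{2\weps}\smallMultError$ gives a total of $\window\logp{\frac{1}{2\weps}+1}+O\parentheses{\logw}+O\parentheses{\logp{\frac{1}{\weps}}}$. Under the theorem's hypothesis \smallEps{}, the leading term is at least $\window(1-o(1))$ and grows without bound as $\epsilon\to 0$, whereas the remaining terms are $O\parentheses{\logw}+O\parentheses{\logp{\frac{1}{\weps}}}=o\parentheses{\window\logp{\frac{1}{2\weps}+1}}$; therefore the total is $\window\logp{\frac{1}{2\weps}+1}\cdot\smallMultError$ bits.

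Finally I would establish the displayed inequality $\window\logp{\frac{1}{2\weps}+1}\cdot\smallMultError\le\frac{1}{2\epsilon}\cdot\smallMultError$. Writing $t\triangleq\frac{1}{2\weps}$, this reduces to $\frac{\logp{1+t}}{t}\le\smallMultError$. For $t\ge1$ the left-hand side is at most $\log 2=1$; and the restriction $\epsilon^{-1}>2\window\parentheses{1-\frac{1}{\logw}}$ forces $t>1-\frac{1}{\logw}=1-o(1)$, so when $t<1$ we get $\frac{\logp{1+t}}{t}<\frac{1}{1-1/\logw}=1+o(1)$; either way the claim follows. The step I expect to be the main obstacle is not any isolated estimate but the bookkeeping: importing the exact values of $\numBlocks$ and $\bsReminderFractionBits$ from the correctness proof and confirming that the array term genuinely dominates --- i.e.\ that the ceiling overheads and the variables $\remainder,\sumOfBits,\currentBlockIndex$ are absorbed into the $\smallMultError$ factor --- over the entire range $\epsilon^{-1}>2\window(1-1/\logw)$, and especially near its boundary, where $\frac{1}{2\weps}$ can be close to $1$ and the ceilings are proportionally most costly.
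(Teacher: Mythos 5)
Your proposal is correct and follows essentially the same route as the paper's proof in Appendix~\ref{app:small-eps-mem-req-proof}: a term-by-term bit count in which the array $\bitarray$ dominates, substitution of $\numBlocks=\frac{1}{2}\eps^{-1}\smallMultError$ (via $\bsReminderFractionBits=\ceil{\logp{\eps^{-1}W}}$, so $2^{-\bsReminderFractionBits}\le\eps/W$), absorption of the $\remainder,\sumOfBits,\currentBlockIndex$ terms into the $\smallMultError$ factor using the fact that the leading term is at least $W(1-o(1))$, and the closing bound $\frac{\logp{1+t}}{t}\le 1+o(1)$ for $t=\frac{1}{2\weps}>1-\frac{1}{\logw}$. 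The only (harmless) imprecision is your description of $\bsReminderFractionBits$ as $\logp{\eps^{-1}}$ plus $\log\log\window$-type corrections, whereas the paper takes $\bsReminderFractionBits=\ceil{\logp{\eps^{-1}W}}$; since $\logw\le\logp{\eps^{-1}}+O(1)$ in this regime, this is still $O\parentheses{\logp{\eps^{-1}}}$ and is absorbed exactly as you claim.
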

The proof is similar to the proof of Theorem~\ref{thm:counting-memory} and therefore deferred to Appendix~\ref{app:small-eps-mem-req-proof}.

We conclude the section by showing that our algorithm is succinct, requiring only $\smallMultError$ times as much memory as the lower bound proved in Theorem~\ref{thm:summing-lower-bound}.
\begin{theorem}
Let $\eps = o(W^{-1})$, and denote $\mathcal{B}\triangleq\window \log \floor{\frac{1}{4\weps} + 1}$. Algorithm~\ref{alg:basic-summing-small-error} provides $\bserror$ additive approximation to \bs{} using $\mathcal{B}\cdot\smallMultError$ memory bits.
\end{theorem}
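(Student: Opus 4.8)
The plan is to combine the correctness guarantee of Theorem~\ref{alg:small-error-correctness} with the memory bound of Theorem~\ref{thm:small-eps-mem-requirements}, and then show that in the regime $\eps = o(W^{-1})$ the memory bound $W\logp{\frac{1}{2W\eps}+1}\cdot\smallMultError$ coincides with $\mathcal{B}\cdot\smallMultError$ up to a further $(1+o(1))$ factor, which can be absorbed. First I would invoke Theorem~\ref{alg:small-error-correctness} directly: Algorithm~\ref{alg:basic-summing-small-error} is an $\bserror$-additive approximation for \bs{}, with no restriction on $\eps$, so in particular it holds for $\eps = o(W^{-1})$. That settles the correctness half of the claim with nothing more to prove.

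Next I would handle the space comparison. By Theorem~\ref{thm:small-eps-mem-requirements}, the algorithm uses $W\logp{\frac{1}{2W\eps}+1}\cdot\smallMultError$ bits (note $\eps = o(W^{-1})$ is exactly the \smallEps{} regime, up to the $(1-o(1))$ slack, so the theorem applies). On the other hand, $\mathcal{B} = \window \log \floor{\frac{1}{4\weps}+1}$. The task is to show $W\logp{\frac{1}{2W\eps}+1} = \mathcal{B}\cdot(1+o(1))$, after which the two $\smallMultError$ factors merge into one. I would argue this by comparing the arguments of the logarithms: write $t \triangleq \frac{1}{2W\eps}$, which tends to $\infty$ since $\eps = o(W^{-1})$. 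Then the algorithm's bound has $\log(1+t)$ per block while $\mathcal{B}$ has $\log\floor{1+t/2}$ per block. Since $t\to\infty$, both $\log(1+t)$ and $\log\floor{1+t/2}$ are $(1+o(1))\log t$ — the additive $\log 2$ difference and the floor are lower-order relative to $\log t \to \infty$ — so $W\log(1+t) = W\log\floor{1+t/2}\cdot(1+o(1)) = \mathcal{B}\cdot(1+o(1))$. Multiplying by the remaining $(1+o(1))$ gives a bound of $\mathcal{B}\cdot(1+o(1))$ bits.

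The only mild subtlety — and the step I would be most careful about — is justifying that the per-block ratio $\frac{\log(1+t)}{\log\floor{1+t/2}}$ is $1+o(1)$ uniformly, i.e. that the $o(1)$ is genuinely controlled by $W\to\infty$ through the divergence of $t = \frac{1}{2W\eps}$. This is where one must use $\eps = o(W^{-1})$ rather than merely $\eps \le \frac{1}{2W}$: the latter only gives $t \ge 1$, for which the ratio is bounded but not $1+o(1)$. Since $t\to\infty$, we have $\log\floor{1+t/2} \ge \log(t/2) = \log t - 1$ and $\log(1+t) \le \log t + 1$, so the ratio is at most $\frac{\log t + 1}{\log t - 1} = 1 + o(1)$, and it is at least $1$, giving the claim. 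I do not expect any real obstacle here; the theorem is essentially a repackaging of the two preceding results once this elementary asymptotic observation is made. I would also remark that a matching $\mathcal{B}$ lower bound follows from Theorem~\ref{thm:summing-lower-bound} (which gives $\Omega(\window\logp{\frac{1}{\weps}})$ for $\eps = o(W^{-1})$, and indeed $\ge\mathcal{B}$ by the same style of estimate), so the word "succinct" is fully justified: the algorithm matches the lower bound to within a $\smallMultError$ factor.
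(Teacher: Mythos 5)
Your proposal is correct and follows essentially the same route as the paper: cite Theorem~\ref{alg:small-error-correctness} for correctness, cite Theorem~\ref{thm:small-eps-mem-requirements} for the bound $W\logp{\frac{1}{2W\eps}+1}\cdot\smallMultError$, and observe that since $t=\frac{1}{2W\eps}\to\infty$ the per-term gap between $\log(1+t)$ and $\log\floor{1+t/2}$ is an additive $O(1)$, hence a $(1+o(1))$ multiplicative factor relative to $\mathcal{B}$ (the paper phrases this as $\le \mathcal{B}(1+\frac{2W}{\mathcal{B}})$ with $\frac{2W}{\mathcal{B}}=o(1)$). Your explicit identification of where $\eps=o(W^{-1})$ is needed, as opposed to merely $\eps\le\frac{1}{2W}$, is exactly the right point of care.
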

\begin{proof}
Theorem~\ref{thm:small-eps-mem-requirements} shows that the number of bits our algorithm requires for $\eps = o(W^{-1})$ is $W\logp{{1\over 2W\eps}+1}\cdot\smallMultError
\le \mathcal{B}(1+\frac{2\window}{\mathcal{B}})\cdot\smallMultError
=\mathcal{B}\cdot\smallMultError$.
\end{proof} 

\section{Discussion}
\label{sec:discussion}

In this paper, we have investigated additive approximations for the \bc{} and \bs{} problems.
For both cases, we have provided space efficient algorithms.
Further, we have proved the first lower bound for additive approximations for the \bc{} problem, and showed that our algorithm achieves this bound,
and is hence optimal.
In the case of \bs{}, whenever \largeEps, the same lower bound as in the \bc{} problems still holds and so our approximation algorithm for this domain is
optimal up to a small factor.
For other values of $\eps$, we have shown an improved lower bound and a corresponding succinct approximation algorithm.

In the future, we would like to study lower and upper bounds for additive approximations for several related problems.
These include, e.g., approximating the sliding window sum of weights for each item in a stream of (item, weight) tuples. 
Further, we intend to explore applying additive approximations in the case of multiple streams.
Obviously, one can allocate a separate counter for each stream, thereby multiplying the space complexity by the number of concurrent streams.
However, it was shown in~\cite{GibbonsT02} that for the case of multiplicative approximations, there is a more space efficient solution.
We hope to show a similar result for additive approximations.
\vspace{-0.1cm}

\subparagraph*{Acknowledgments}

We thank Dror Rawitz for helpful comments. This work was partially funded by MOST grant \#3-10886 and the Technion-HPI research school.

\newpage
\nocite{Simpson}

\newpage
\appendix
\section{Proof of Lemma~\ref{lemma:algorithm-ratio}}\label{appendix-bc-optimality}

\begin{proof}
We make a simple case analysis to show that the above holds:
\begin{description}
\item[$\epsilon \geq \frac{\log ^{-1} W - W^{-1}}{2}$:]
In this case the maximum in the denominator is $\log W$. Hence
\begin{align*}
\frac{\frac{1}{2\epsilon}+2\log W+O(1)}{\lowerbound} 
&\leq\frac{\frac{1}{\log ^{-1} W - W^{-1}}+2\log W+O(1)}{\log W} 
=2+\frac{\frac{1}{\log ^{-1} W - W^{-1}}+O(1)}{\log W}\\
&=2+\frac{W\log W+O(W)}{\log W (W-\log W)}
=3+\frac{\log ^2 W+O(W)}{\log W (W-\log W)}\\
&=3+\frac{\log W}{W-\log W}+\frac{O(1)}{\log W}
=3+o(1)
\end{align*}
\item[$\epsilon < \frac{\log ^{-1} W - W^{-1}}{2}$:] In this case, the maximum in the denominator is $\frac{1}{2\epsilon+W^{-1}}$. Hence
\begin{align*}
\frac{\frac{1}{2\epsilon}+2\log W+O(1)}{\lowerbound}
&=\frac{\frac{1}{2\epsilon}+2\log W+O(1)}{ \frac{1}{2\epsilon+W^{-1}}}\\
&< \frac{2\epsilon+W^{-1}}{2\epsilon}+2\log W(2\epsilon+W^{-1}) + \frac{O(1)}{\log W}\\
&= 1+\frac{W^{-1}}{2\eps}+4\epsilon\log W +2\frac{\log W}{W} + \frac{O(1)}{\log W}\\
&=1+\frac{W^{-1}}{2\epsilon}+4\epsilon\log W+o(1).
\end{align*}
An analysis of the function
$$g(W) = 1+\frac{W^{-1}}{2\epsilon}+4\epsilon\log W$$
reveals that it has a single extremum for positive $W$s, located at $W=\frac{1}{8\epsilon^2}$.
We check the value of the function at the extremum and at its boundaries:
\begin{description}
\item [$W = \frac{1}{2\epsilon}$:]
This is a boundary because $\epsilon<\frac{1}{2W}$ means that the algorithm is required to give an exact counting of the number of ones.
On this boundary we get
$$
g\left(\frac{1}{2\epsilon}\right)
=1+\frac{2\epsilon}{2\epsilon}-4\epsilon\log (2\epsilon)=2+{2\over W}\log W=2+o(1).
$$

\item [The extremum $W=\frac{1}{8\epsilon^2}$:]
\begin{align*}
g\left(\frac{1}{8\epsilon^2}\right)&=1+\frac{8\epsilon^2}{2\epsilon}-4\epsilon\log (8\epsilon^2) + o(1)=1+{4\over \sqrt{8W}}\cdot(1+\log W)+o(1)\\
&=1+o(1)
\end{align*}
\end{description}
\end{description}
\end{proof}

\section{Proof of Theorem~\ref{thm:bsRandomizedLB}}\label{apx:bsRandomizedLB}
\begin{proof}
The case for $\Omega(\frac{1}{\eps}+\logw)$ bits when $\frac{1}{2\window}\le\eps\le\frac{1}{4}$ is similar to Theorem~\ref{thm:RandomizedBasicCountingLB}. The bound is obtained by replacing every set bit with the integer $\bsrange$ in the corresponding languages.
In case $\eps<\frac{1}{2W}$, we consider the padded language obtained by adding $W$ $0$'s to every word in the language defined in Lemma~\ref{lem:summing-lb-2}.
Further, we define the distribution $\mathbf p$ to be uniform over
$$L = L_{R,W,\eps}\cdot 0^W.$$
Similarly to Theorem~\ref{thm:RandomizedBasicCountingLB}, Lemma~\ref{lem:summing-lb-2} implies that for (at least) $(1-2\delta)$ fraction of the inputs in $L$, the algorithm must reach a distinct state after reading the first $W$ integers. Therefore, the derived lower bound according to Yao's Minimax principle~\cite{minimax} is:
$$B_{MC}\ge \frac{1}{2}\logp{(1-2\delta)\cdot|L|}\ge\frac{1}{2}\window \log \floor{\frac{1}{4\weps}+1}+\frac{1}{2}\logp{1-2\delta}=
\Omega\parentheses{\window\logp{\frac{1}{\weps}}}.$$ 
\end{proof}


\section{Proof of Theorem~\ref{thm:approximation-ratio}}
\label{sec:proof-basic-summing}

\begin{proof}
First, let us introduce some notations used in the proof. Element $j$ is marked $\inputVariable_j$ or $\bsFracInput_j$ after we divide it by $\bsrange$ and round it.
Assume that the index of the last element is $\window+\blockOffset$, where $\inputVariable_\window$ is the last element of a block.3
When we refer to the block index $\currentBlockIndex$, we refer to its value after $\window+\blockOffset$ elements have been processed.
We denote $\remainder_j$ the value of $\remainder$ \emph{after} adding $\inputVariable_j$.

The setting for the proof is given in Figure \ref{fig:basic-summing}.
Our goal is to approximate
\begin{equation}
\bsAnalysisTarget \triangleq \sum_{j=\blockOffset+1}^{\window+\blockOffset} \inputVariable_j \label{eq:bs-to-approximate}
\end{equation}

\begin{figure}[t]
\centering
\includegraphics[scale=0.28]{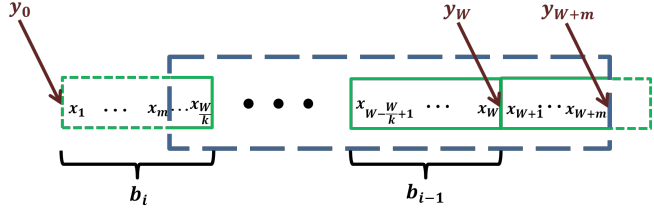}
\caption{The setting for the proof of Theorem~\ref{thm:approximation-ratio}.}
\label{fig:basic-summing}
\end{figure}

Algorithm~\ref{alg:basic-summing} uses the following approximation to answer this query:
\begin{align}
\bsAnalysisEstimator
&= \bsrange \cdotpa{\blockSize \cdot \sumOfBits + \remainder_{\window+\blockOffset} - \halfBlock -\blockOffset\cdot \currentBlock }\notag\\
&= \bsrange \cdotpa{\blockSize \cdot \sumOfBits + \remainder_{\window} + \sum_{j=\window+1}^{\window+m} \bsFracInput_j - \halfBlock -\blockOffset\cdot \currentBlock} \label{eq:bs-approximation1}
 \end{align}
At the end of block $j$, if $\remainder$ is decreased by $\blockSize$, then $\bitarray_j$ is set and will not be cleared before time $\window+\blockOffset$.
Therefore $$\blockSize \cdot \sumOfBits + \remainder_{\window} = \remainder_0 + \sum_{j=1}^{\window} \bsFracInput_j.$$

Substituting $\frac{W}{k} \cdot B + y_W$ in \eqref{eq:bs-approximation1} we get
\begin{align}
\bsAnalysisEstimator
&= \bsrange \cdotpa{\remainder_0 + \sum_{j=1}^{\window} \bsFracInput_j  + \sum_{j=\window+1}^{\window+\blockOffset} \bsFracInput_j - \halfBlock-\blockOffset\cdot \currentBlock}\notag\\
&= \bsrange \cdotpa{\remainder_0 + \sum_{j=1}^{\blockOffset} \bsFracInput_j  + \sum_{j=\blockOffset+1}^{\window+\blockOffset} \bsFracInput_j - \halfBlock-\blockOffset\cdot \currentBlock}\label{eq:bs-post-sub}.
\end{align}
Denote the rounding error over the entire window
\begin{equation*}
\bsRoundingError\triangleq\sum_{j=\blockOffset+1}^{\window+\blockOffset} \inputVariable_j - \bsrange\cdot{\sum_{j=\blockOffset+1}^{\window+\blockOffset} \bsFracInput_j}.
\end{equation*}
From line \ref{line:rounding}, we know that
$\bsFracInput_j =\text{Round}_{\bsReminderFractionBits}\parentheses{\frac{\inputVariable_j}{R}},$
while $\abs{\frac{\inputVariable_j}{\bsrange}- \text{Round}_{\bsReminderFractionBits}\parentheses{\frac{\inputVariable_j}{R}}}\le 2^{-1-\bsReminderFractionBits},$ and thus:
\begin{equation}
-2^{-1-\bsReminderFractionBits}\cdot\bsrange\window\le\bsRoundingError\le2^{-1-\bsReminderFractionBits}\cdot\bsrange\window\label{eq:bs-rounding-error}
\end{equation}
if $\bsReminderFractionBits<\log\bsrange$, or $0$ otherwise (as we then store $\frac{\inputVariable_j}{R}$ accurately).
Thus, we can write \eqref{eq:bs-post-sub} as:
\begin{equation*}
\bsAnalysisEstimator=
\sum_{j=\blockOffset+1}^{\window+\blockOffset} \inputVariable_j + \bsRoundingError + \bsrange \cdotpa{\remainder_0 + \sum_{j=1}^{\blockOffset} \bsFracInput_j   - \halfBlock-\blockOffset\cdot \currentBlock}.
\end{equation*}


Plugging \eqref{eq:bs-to-approximate}, we get
\begin{equation*}
\bsAnalysisEstimator=
\bsAnalysisTarget + \bsRoundingError + \bsrange \cdotpa{\remainder_0 + \sum_{j=1}^{\blockOffset} \bsFracInput_j   - \halfBlock-\blockOffset\cdot \currentBlock}.
\end{equation*}
Therefore, the error is
\begin{equation}
\bsAnalysisError = \bsRoundingError + \bsrange \cdotpa{\remainder_0 + \sum_{j=1}^{\blockOffset} \bsFracInput_j   - \halfBlock-\blockOffset\cdot \currentBlock}\label{eq:bs-est-error}.
\end{equation}

We consider two cases:
\begin{itemize}
\item $\currentBlock = 1$: This means that $\remainder$ has crossed the threshold by time $\blockSize$, i.e.,
$$\remainder_0 + \sum_{j=1}^{\blockSize}\bsFracInput_j \geq \blockSize
\mathrm{\ and\ equivalently\ }
\remainder_0 + \sum_{j=1}^{\blockOffset}\bsFracInput_j \geq \blockSize - \sum_{j=\blockOffset+1}^{\blockSize}\bsFracInput_j.$$
Going back to \eqref{eq:bs-est-error}, we get on one side
\begin{align}
\bsAnalysisError &= \bsRoundingError + \bsrange \cdotpa{\remainder_0 + \sum_{j=1}^{\blockOffset} \bsFracInput_j   - \halfBlock-\blockOffset\cdot \currentBlock} \notag \\
&\ge \bsRoundingError + \bsrange \cdotpa{\blockSize - \sum_{j=\blockOffset+1}^{\blockSize}\bsFracInput_j   - \halfBlock-\blockOffset} \notag\\
&\ge \bsRoundingError + \bsrange \cdotpa{\halfBlock - \left(\sum_{j=\blockOffset+1}^{\blockSize}1\right) - \blockOffset} \notag\\
&\ge \bsRoundingError - \bsrange \cdot\halfBlock \label{eq:bs-error1}
\end{align}

To bound the error from above we use the fact that the value of $\remainder$ at the end of a block never exceeds $\threshold$.
This can be shown by induction, as $\remainder$ is increased at most $\blockSize$ times during one block, and then reduced by $\threshold$ if it exceeds the block size.
Therefore, \eqref{eq:bs-est-error} can be bounded as follows:
\begin{align}
\bsAnalysisError &= \bsRoundingError + \bsrange \cdotpa{\remainder_0 + \sum_{j=1}^{\blockOffset} \bsFracInput_j - \halfBlock-\blockOffset\cdot \currentBlock}\notag\\
&\le \bsRoundingError + \bsrange \cdotpa{\remainder_0- \halfBlock}\le \bsRoundingError + \bsrange \cdot\halfBlock\label{eq:bs-error2}.
\end{align}

\item $\currentBlock = 0$: Similarly, this means that $\remainder$ was smaller than the threshold at the end of block $\currentBlockIndex$. Hence
$\remainder_0 + \sum_{j=1}^{\blockSize}\bsFracInput_j \leq \blockSize - 1$, or equivalently
    $$\remainder_0 + \sum_{j=1}^{\blockOffset}\bsFracInput \leq \blockSize - \sum_{j=\blockOffset+1}^{\blockSize}\bsFracInput_j - 1.$$
    Thus, the upper bound is
        \begin{align}
    	\bsAnalysisError &= \bsRoundingError + \bsrange \cdotpa{\remainder_0 + \sum_{j=1}^{\blockOffset} \bsFracInput_j   - \halfBlock}\notag \\
    	&\le\bsRoundingError + \bsrange \cdotpa{\blockSize - \sum_{j=\blockOffset+1}^{\blockSize}\bsFracInput_j - 1 - \halfBlock} 
        \le \bsRoundingError + \bsrange \cdot\halfBlock\label{eq:bs-error4}.
        \end{align}
        Our error is then bounded from below by
    \begin{align}
\bsAnalysisError &= \bsRoundingError + \bsrange \cdotpa{\remainder_0 + \sum_{j=1}^{\blockOffset} \bsFracInput_j   - \halfBlock} 
\ge\bsRoundingError - \bsrange \cdot \halfBlock\label{eq:bs-error3}.
    \end{align}

\end{itemize}
We need to bound \eqref{eq:bs-error1},\eqref{eq:bs-error2},\eqref{eq:bs-error4} and \eqref{eq:bs-error3} by $\bserror$. Using \eqref{eq:bs-rounding-error}, it is enough to require
\begin{equation*}
2^{-1-\bsReminderFractionBits}+\frac{1}{2k}\le \epsilon.
\end{equation*}
Therefore, we choose the number of blocks to be
\begin{equation}
\numBlocks\triangleq\ceil{ \frac{1}{2\eps -2^{-\bsReminderFractionBits}}}\label{eq:numBlocks}.
\end{equation}
and thus provide an estimation that is an $\bserror$-additive approximation for \bs.

Finally, we choose the number of bits representing the fractional value of $\remainder$ to be $$\bsReminderFractionBits\triangleq \ceil{\logp{\eps^{-1}\log W}}
\ge\logp{\eps^{-1}\log W}$$.

Therefore, the number of blocks is
$$
\numBlocks=\ceil{ \frac{1}{2\eps -2^{-\bsReminderFractionBits}}}
\le\ceil{\frac{1}{2\eps-{\eps\over\log W}}}
=\ceil{\frac{\logw}{2\eps(\logw-1)}}
$$
To have $k\le W$ it suffices to require
$\eps^{-1} \le 2W\left(1-\frac{1}{\logw}\right)$.

\end{proof}

\section{Proof of Theorem~\ref{thm:bs-memory-requirements}}\label{app:bs-memory-requirements-proof}
\begin{proof}
	We represent $\remainder$ using $\ceil{2+\logweps+\bsReminderFractionBits}$ bits, $\blockOffset$ using
	$\ceil{1+\logweps}$ bits and $\bitarray$ using $\numBlocks=\ceil{ \frac{1}{2\eps -2^{-\bsReminderFractionBits}}}$ bits.
	Additionally, $\currentBlockIndex$ requires $\ceil{\log \numBlocks}$ bits, and $\sumOfBits$ another $\ceil{\logp{\numBlocks+1}}$ bits.
	Overall the number of bits required is
	\begin{align*}
	& \numBlocks + \ceil { \logweps + \bsReminderFractionBits} + \ceil\logweps + \ceil{\log \numBlocks} + \ceil{\logp{\numBlocks+1}}\\
	&=\numBlocks + 2\logweps +  \bsReminderFractionBits + 2\log \numBlocks +O(1).
	\end{align*}
	The number of blocks is  \footnotesize
	\begin{align*}
	\numBlocks&=\ceil{ \frac{1}{2\eps -2^{-\bsReminderFractionBits}}}
	\le 1 + \frac{1}{2\eps-{\eps\over\log W}}
	=1+\frac{1}{2}\eps^{-1}\cdot{2\logw\over2\logw-1}\\
	&=1+\frac{1}{2}\eps^{-1}\cdot\left(1+{1\over 2\logw-1}\right)
	=1+\frac{1}{2}\eps^{-1}\smallMultError.
	\end{align*}\normalsize
	Thus, the space consumption becomes
		\footnotesize
	\begin{align*}
	&\numBlocks + 2\logweps +  \bsReminderFractionBits + 2\log \numBlocks +O(1)\\
	&=\left({\eps^{-1}\over 2}+2\logw-2\log\eps^{-1}+\ceil{\logp{\eps^{-1}\log W}}
	+2\logp{1+{\eps^{-1}\over 2}}\right)\cdot\smallMultError\\
	&=\left({\eps^{-1}\over 2}+2\logw-\log\eps^{-1}+2\logp{1+{\eps^{-1}\over 2}}\right)\smallMultError\\
	&\le\left({\eps^{-1}\over 2}+2\logw+2\logp{1+{\eps^{-1}\over 2}}\right)\smallMultError.
	\end{align*} \normalsize
	There are two cases. If $\frac{\eps^{-1}}{2}\le\logw-1$, the space is less than or equal to\footnotesize
	$$\left({\eps^{-1}\over 2}+2\logw+2\log\logw\right)\smallMultError
	=\parentheses{2\logw + \frac{1}{2\epsilon}}\smallMultError{}.$$\normalsize
	Otherwise,\footnotesize
	\begin{align*}
	&\left({\eps^{-1}\over 2}+2\logw+\eps^{-1}\cdot\frac{2}{\eps^{-1}}\logp{1+{\eps^{-1}\over 2}}\right)
	\smallMultError\\
	\le&\left({\eps^{-1}\over 2}+2\logw+\eps^{-1}{\log\logw\over\left(\logw-1\right)}\right)\smallMultError
	=\parentheses{2\logw + \frac{1}{2\epsilon}}\smallMultError{}.
	\end{align*}\normalsize
\end{proof}

\section{Proof of Theorem~\ref{alg:small-error-correctness}}\label{app:small-errors-approx-proof}
\begin{proof}
The notations for this proof are the same as in Appendix~\ref{sec:proof-basic-summing}. In this case all "blocks" are of size $1$ so $\blockOffset$ is always $0$.

If at time $j$ $\remainder$ is decreased by $\alpha\cdot\blockSize$, then $\bitarray_j$ is set to $\alpha$ and will not change before time $\window$.
Therefore $$\blockSize \cdot \sumOfBits + \remainder_{\window} = \remainder_0 + \sum_{j=1}^{\window} \bsFracInput_j.$$
Substituting $\frac{W}{k} \cdot B + y_W$ in \eqref{eq:bs-approximation1} we get
\begin{align}
\bsAnalysisEstimator
&= \bsrange \cdotpa{\remainder_0 + \sum_{j=1}^{\window} \bsFracInput_j - \halfBlock}\notag\\
\label{eq:small-error-bs-post-sub}.
\end{align}

Plugging \eqref{eq:bs-to-approximate} and taking into consideration the rounding error, we get
\begin{equation*}
\bsAnalysisEstimator=
\bsAnalysisTarget + \bsRoundingError + \bsrange \cdotpa{\remainder_0 - \halfBlock}.
\end{equation*}
Therefore, the error is
\begin{equation}
\bsAnalysisEstimator-\bsAnalysisTarget =
\bsRoundingError + \bsrange \cdotpa{\remainder_0 - \halfBlock}.\label{eq:small-error-bs-est-error}
\end{equation}

To bound the error we use the fact that $0\le\remainder<\threshold$ at the end of a block.
Therefore, \eqref{eq:small-error-bs-est-error} can be bounded as follows:
$$\bsRoundingError - \bsrange \cdot\halfBlock \le \bsAnalysisError<\bsRoundingError + \bsrange \cdot\halfBlock$$

$k$ remains  
$$\numBlocks=\ceil{ \frac{1}{2\eps -2^{-\bsReminderFractionBits}}}$$
and the error is bounded by $\pm RW\epsilon$, similarly to Theorem~\ref{thm:approximation-ratio}.
\end{proof}

\section{Proof of Theorem~\ref{thm:small-eps-mem-requirements}}\label{app:small-eps-mem-req-proof}
\begin{proof}
We represent $\remainder$ using $\bsReminderFractionBits$ bits and $\bitarray$ using $W\ceil{\logp{\iblockSize+1}}$ bits.
Additionally, $\currentBlockIndex$ requires $\ceil{\log W}$ bits, and $\sumOfBits$ another $\ceil{\logp{\numBlocks+1}}$ bits.
Overall the number of bits required is
$$\bsReminderFractionBits +  W\ceil{\log\left({\frac{k}{W}+1}\right)} + \ceil{\log W} + \ceil{\logp{\numBlocks+1}}
=\left(W\logp{{k\over W}+1} + \bsReminderFractionBits+\log k \right)\smallMultError.$$
We choose the number of bits representing the fractional value of $\remainder$ to be $$\bsReminderFractionBits\triangleq \ceil{\logp{\eps^{-1} W}}
\ge\logp{\eps^{-1} W}.$$
This way, $k$ becomes
$$\ceil{ \frac{1}{2\eps -2^{-\bsReminderFractionBits}}}
\le\ceil{\frac{1}{2\eps -{\eps\over W}}}
=\ceil{\frac{1}{2\eps}\left(\frac{2W}{2W-1}\right)}
\le 1+\frac{1}{2}\eps^{-1}\smallMultError.$$
Thus, the space consumption is\footnotesize
\begin{align*}
&\parentheses{W\logp{{k\over W}+1} + \bsReminderFractionBits+\log k }\smallMultError\\
=&\parentheses{W\logp{{k\over W}+1} + \logp{\eps^{-1}W}+\log {k\over W}+\logw}\smallMultError\\
=&\parentheses{W\logp{{k\over W}+1}+ \log\eps^{-1}}\smallMultError.
\end{align*}
\normalsize
Recall that $k\ge{\eps^{-1}\over 2}$ and hence $\eps^{-1}\le 2k$. Overall, space becomes
\footnotesize
\begin{align*}
&\parentheses{W\logp{{k\over W}+1}+ \log\eps^{-1}}\smallMultError
\le \parentheses{W\logp{{k\over W}+1}+ \log k}\smallMultError\\
&=\parentheses{W\logp{{k\over W}+1}+ \log {k\over W}+\logw}\smallMultError
=\parentheses{W\logp{\left(2W\eps\right)^{-1}+1}}\smallMultError\\
&={1\over2\eps}\parentheses{\logp{\left(2W\eps\right)^{-1}+1}\over \left(2W\eps\right)^{-1}}\smallMultError
<\frac{1}{2\eps}\cdot\smallMultError.
\end{align*}\normalsize
\end{proof}







\end{document}